\newtheorem{theorem}{Theorem}
\newtheorem{lemma}[theorem]{Lemma}
\newtheorem{corollary}[theorem]{Corollary}
\newtheorem{claim}[theorem]{Claim}
\theoremstyle{definition}
\newtheorem{definition}{Definition}
\newcommand{\Supp}{{\mathop{\rm supp}\nolimits}}
\newcommand{\Vol}{{\mathop{\rm Vol}\nolimits}}
\newcommand{\Poly}{{\mathop{\rm poly}\nolimits}}
\def\eps{\epsilon}
\begin{document}

\title{Random Tensors and Planted Cliques}
\author{S. Charles Brubaker \;\;\;\;\;\;  Santosh S. Vempala\\
Georgia Institute of Technology\\
Atlanta, GA 30332 \\
\{brubaker,vempala\}@cc.gatech.edu}
\date{}

\maketitle
\begin{abstract}
The $r$-parity tensor of a graph is a generalization of the adjacency
matrix, where the tensor's entries denote the parity of the number of
edges in subgraphs induced by $r$ distinct vertices. For $r=2$, it is
the adjacency matrix with $1$'s for edges and $-1$'s for nonedges. It
is well-known that the $2$-norm of the adjacency matrix of a random
graph is $O(\sqrt{n})$. Here we show that the $2$-norm of the
$r$-parity tensor is at most $f(r)\sqrt{n}\log^{O(r)}n$, answering a
question of Frieze and Kannan \cite{FK08} who proved this for
$r=3$. As a consequence, we get a tight connection between the planted
clique problem and the problem of finding a vector that approximates
the $2$-norm of the $r$-parity tensor of a random graph.  Our proof
method is based on an inductive application of concentration of
measure.
\end{abstract}

\section{Introduction}

It is well-known that a random graph $G(n,1/2)$ almost surely has a
clique of size $(2+o(1))\log_2 n$ and a simple greedy algorithm finds
a clique of size $(1+o(1))\log_2 n$. Finding a clique of size even
$(1+\eps)\log_2 n$ for some $\eps > 0$ in a random graph is a
long-standing open problem posed by Karp in 1976 \cite{Karp76} in his
classic paper on probabilistic analysis of algorithms.

In the early nineties, a very interesting variant of this question was
formulated by Jerrum \cite{Jerrum92} and by Kucera
\cite{Kucera95}. Suppose that a clique of size $p$ is planted in a
random graph, i.e., a random graph is chosen and all the edges within
a subset of $p$ vertices are added to it. Then for what value of $p$
can the planted clique be found efficiently? It is not hard to see
that $p > c\sqrt{n\log n}$ suffices since then the vertices of the
clique will have larger degrees than the rest of the graph, with high
probability \cite{Kucera95}. This was improved by Alon et al
\cite{AKS98} to $p=\Omega(\sqrt{n})$ using a spectral approach. This
was refined by McSherry \cite{McSherry01} and considered by Feige and
Krauthgamer in the more general semi-random model \cite{FeigeK00}.
For $p \geq 10 \sqrt{n}$, the following simple algorithm works: form a
matrix with $1$'s for edges and $-1$'s for nonedges; find the largest
eigenvector of this matrix and read off the top $p$ entries in
magnitude; return the set of vertices that have degree at least $3p/4$
within this subset.

The reason this works is the following: the top eigenvector of a symmetric matrix $A$
can be written as
\[
\max_{x: \|x\|=1} x^TAx = \max_{x: \|x\| = 1} \sum_{ij} A_{ij}x_i x_j
\]
maximizing a quadratic polynomial over the unit sphere. The maximum
value is the spectral norm or $2$-norm of the matrix. For a random
matrix with $1,-1$ entries, the spectral norm (largest eigenvalue) is
$O(\sqrt{n})$. In fact, as shown by F\"{u}redi and Koml\'{o}s \cite{FK81,
Vu05}, a random matrix with i.i.d. entries of variance at most $1$ has
the same bound on the spectral norm. On the other hand, after planting
a clique of size $\sqrt{n}$ times a sufficient constant factor, the
indicator vector of the clique (normalized) achieves a higher
norm. Thus the top eigenvector points in the direction of the clique
(or very close to it).

Given the numerous applications of eigenvectors (principal
components), a well-motivated and natural generalization of this
optimization problem to an $r$-dimensional tensor is the following:
given a symmetric tensor $A$ with entries $A_{k_1k_2\ldots k_r}$, find
\[
\|A\|_2 = \max_{x: \|x\|=1} A(x,\ldots,x),
\]
where
\[
A(x^{(1)},\ldots,x^{(r)}) = \sum_{i_1i_2\ldots i_r} A_{i_1i_2\ldots
i_r}x^{(1)}_{i_1}x^{(2)}_{i_2}\ldots x^{(r)}_{i_r}.
\]
The maximum value is the spectral norm or $2$-norm of the tensor.  The
complexity of this problem is open for any $r > 2$, assuming the
entries with repeated indices are zeros.

A beautiful application of this problem was given recently by Frieze
and Kannan \cite{FK08}. They defined the following tensor associated
with an undirected graph $G=(V,E)$:
\[
A_{ijk} = E_{ij}E_{jk}E_{ki}
\]
where $E_{ij}$ is $1$ is $ij \in E$ and $-1$ otherwise, i.e.,
$A_{ijk}$ is the parity of the number of edges between $i,j,k$ present
in $G$. They proved that for the random graph $G_{n,1/2}$, the
$2$-norm of the random tensor $A$ is $\tilde{O}(\sqrt{n})$, i.e.,
\[
\sup_{x: \|x\| = 1} \sum_{i,j,k} A_{ijk}x_i x_j x_k \le
C \sqrt{n}\log^c n
\]
where $c,C$ are absolute constants. This implied that if such a
maximizing vector $x$ could be found (or approximated), then we could
find planted cliques of size as small as $n^{1/3}$ times
polylogarithmic factors in polynomial time, improving substantially on
the long-standing threshold of $\Omega(\sqrt{n})$.

Frieze and Kannan ask the natural question of whether this connection
can be further strengthened by going to $r$-dimensional tensors for $r
> 3$. The tensor itself has a nice generalization. For a given graph
$G=(V,E)$ the $r$-parity tensor is defined as follows. Entries with
repeated indices are set to zero; any other entry is the parity of the
number of edges in the subgraph induced by the subset of vertices
corresponding to the entry, i.e.,
\[
A_{k_1,\ldots,k_r} = \prod_{1 \le i < j \le r} E_{k_ik_j}.
\]
Frieze and Kannan's proof for $r=3$ is combinatorial (as is the proof
by F\"{u}redi and Koml\'{o}s for $r=2$), based on counting the number
of subgraphs of a certain type. It is not clear how to extend this
proof.

Here we prove a nearly optimal bound on the spectral norm of this
random tensor for any $r$. This substantially strengthens the
connection between the planted clique problem and the tensor norm
problem. Our proof is based on a concentration of measure approach. In
fact, we first reprove the result for $r=3$ using this approach and
then generalize it to tensors of arbitrary dimension.  We show that
the norm of the subgraph parity tensor of a random graph is at most
$f(r)\tilde{O}(\sqrt{n})$ whp.  More precisely, our main theorem is
the following.
\begin{theorem}\label{thrm:r-tensor-norm}
There is a constant $C_1$ such that with probability at least $1 -
n^{-1}$ the norm of the $r$-dimensional subgraph parity tensor $A:
[n]^r \rightarrow \{-1,1\}$ for the random graph $G_{n,1/2}$ is
bounded by
\[
\|A\|_2 \leq C_1^r r^{(5r-1)/2} \sqrt{n} \log^{(3r-1)/2} n.
\]
\end{theorem}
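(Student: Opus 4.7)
The plan is to proceed by induction on $r$, combining a union bound over an $\epsilon$-net of the sphere with an inductive application of concentration of measure that peels off one coordinate of the tensor at a time. The base case $r=2$ is the classical F\"uredi--Koml\'os bound $\|A\|_2 = O(\sqrt{n})$ with exponentially small failure probability, which is much stronger than what the theorem requires.

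First I would reduce the symmetric norm to the multilinear norm $\sup_{x^{(1)},\ldots,x^{(r)} \in S^{n-1}} A(x^{(1)},\ldots,x^{(r)})$ by standard polarization (losing only an $r$-dependent constant), and cover each copy of $S^{n-1}$ by an $\epsilon$-net $\mathcal{N}$ of cardinality $(3/\epsilon)^n$ for a small absolute $\epsilon$. By a routine perturbation argument the sup over $\mathcal{N}^r$ is within a constant of the sup over the product of spheres, so it suffices to bound $|A(x^{(1)},\ldots,x^{(r)})|$ by the target $M_r := C_1^r r^{(5r-1)/2}\sqrt{n}\log^{(3r-1)/2} n$ for each fixed tuple in $\mathcal{N}^r$ with probability at least $1 - e^{-c r n \log n}$, and then take a union bound.

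For the inductive step, fix unit vectors $x^{(1)},\ldots,x^{(r)}$ and decompose along the last coordinate:
\[
A(x^{(1)},\ldots,x^{(r)}) = \sum_{i_r=1}^{n} x^{(r)}_{i_r}\, B^{(i_r)}\bigl(y^{(1,i_r)},\ldots,y^{(r-1,i_r)}\bigr),
\]
where $B^{(i_r)}$ is the $(r-1)$-parity tensor of the induced graph on $[n]\setminus\{i_r\}$ and $y^{(t,i_r)}_{i_t} := x^{(t)}_{i_t} E_{i_t i_r}$. Condition on all edges internal to $[n]\setminus\{i_r\}$, so that every $B^{(i_r)}$ is determined; the inductive hypothesis yields $\|B^{(i_r)}\|_2 \leq M_{r-1}$ for every $i_r$ simultaneously with probability $\geq 1 - n^{-1}$. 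Conditional on this event, the summands $Z_{i_r} := x^{(r)}_{i_r} B^{(i_r)}(y^{(1,i_r)},\ldots,y^{(r-1,i_r)})$ depend on disjoint sets of Rademachers $\{E_{i_t i_r}\}_{t<r}$, so they are independent across $i_r$, have mean zero, and a direct second-moment calculation exploiting orthogonality of products of independent $\pm 1$ variables shows that the total conditional variance $\sum_{i_r}\operatorname{Var}(Z_{i_r}\mid \text{cond.})$ is at most $1$. I would then apply a Bernstein/Freedman-type inequality for sums of independent sub-exponential variables, using hypercontractivity of multilinear Rademacher polynomials of degree $r-1$ to replace the worst-case pointwise bound $|Z_{i_r}| \leq M_{r-1}$ by the much smaller \emph{typical} scale $\asymp |x^{(r)}_{i_r}|\, r^{(r-1)/2}\log^{(r-1)/2} n$.

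\textbf{Main obstacle.} The crux is forcing the per-step multiplicative loss to be only $r^{5/2}\log^{3/2} n$ rather than $\sqrt{n}$. Applied naively with the worst-case size $|Z_{i_r}| \leq M_{r-1} \approx \sqrt{n}$, Bernstein's inequality would inflate $M_r$ by a factor of $\sqrt{n}$ per inductive level, yielding a final bound of order $n^{(r-1)/2}$ instead of the claimed $\sqrt{n}\cdot\mathrm{polylog}(n)$. Avoiding this requires truncating each $Z_{i_r}$ at its typical sub-exponential scale (via hypercontractivity plus a union bound over $i_r$) and carefully balancing the residual tail against the variance term in a sub-Weibull refinement of Bernstein's inequality; the hypercontractivity constant $r^{(r-1)/2}$ compounded across $r-1$ inductive steps is what produces the unusual $r^{(5r-1)/2}$ and $\log^{(3r-1)/2} n$ prefactors in the statement.
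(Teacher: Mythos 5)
Your decomposition along the last index and the observation that, conditioned on the internal edges, the summands $Z_{i_r}$ are independent across $i_r$ is a sound starting point, and you correctly flag the core difficulty: a naive Bernstein bound with the worst-case envelope $|Z_{i_r}|\leq M_{r-1}\approx\sqrt n$ is useless. However, the proposed fix via truncation plus hypercontractivity does not close the gap, and the reason is a quantitative mismatch between your $\epsilon$-net and the tails you can actually obtain for a \emph{fixed} tuple.

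Concretely: your net has $(3/\epsilon)^{rn}$ points, so you need per-point failure probability $e^{-crn}$. But conditionally on the internal edges, $\sum_{i_r}Z_{i_r}$ is a sum of independent degree-$(r-1)$ Rademacher chaoses with total variance $\lesssim 1$, and each has sub-Weibull tail $\Pr[|Z_{i_r}|>s]\approx\exp\bigl(-(s/|x^{(r)}_{i_r}|)^{2/(r-1)}\bigr)$. When $x^{(r)}$ is concentrated (say $\|x^{(r)}\|_\infty$ of order $1$, which your $\epsilon$-net does not exclude), a \emph{single} summand exceeds $t\approx\sqrt{n}$ with probability about $\exp(-n^{1/(r-1)})$, and this is an irreducible floor for the tail of the sum: no amount of truncation or sub-Weibull Bernstein can push it below. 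Even in the friendliest regime, after truncating at the polylog scale you only get a linear-in-$t$ Bernstein tail $e^{-t/L}$, which at $t=\tilde O(\sqrt n)$ and $L=\operatorname{polylog}(n)$ is $e^{-\tilde O(\sqrt n)}$, still exponentially far from the required $e^{-crn}$. So the inductive step cannot survive the union bound over your net. Separately, your statement that the inductive hypothesis gives $\|B^{(i_r)}\|_2\leq M_{r-1}$ \emph{simultaneously for all $n$ choices of $i_r$} with probability $1-n^{-1}$ is wrong as written: the IH only gives $1-n^{-1}$ per instance, and a union bound over $n$ instances destroys this (one would need a much stronger IH probability for the induction on $r$ to be self-improving).

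The paper avoids both problems by (i) discretizing not onto a generic $\epsilon$-net of $S^{n-1}$ but onto scaled indicator vectors $U=\{\pm|S|^{-1/2}\chi^S\}$ (Lemma~\ref{thrm:U-approx}), which gives $\|x\|_\infty=|\operatorname{supp}(x)|^{-1/2}$ for free; and (ii) bounding \emph{sums of squares} $\sum_{k}(u^{(k)}\cdot v^{(k)})^2$ via Lemma~\ref{thrm:concentration}, whose tail $e^{-t/18}(4\sqrt{e\pi})^N$ is calibrated exactly to the $n^{n_\ell}$-size union bound over indicator vectors of support size $n_\ell$. The induction is then over the level $\ell$ \emph{within} a fixed $r$ (the quantities $f_\ell$), not an induction on $r$ that invokes the theorem itself, and there is an additional reduction to off-diagonal blocks (Lemma~\ref{thrm:off-diag}) that makes your variance computation clean (your cross-term orthogonality claim is not correct on the full diagonal without that reduction). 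If you want to salvage the spirit of your approach, you would need to (a) replace the sphere $\epsilon$-net by the indicator-vector discretization so that $\|x^{(r)}\|_\infty$ scales as $|\operatorname{supp}(x^{(r)})|^{-1/2}$, (b) bound $\sum_{i_r}B^{(i_r)}(\cdots)^2$ rather than the signed sum and finish with Cauchy--Schwarz, and (c) strengthen the probability in the IH or, better, abandon induction on $r$ in favor of the paper's peel-one-index-at-a-time induction within a fixed $r$.
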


The main challenge to the proof is the fact that the entries of the
tensor $A$ are not independent.  Bounding the norm of the tensor where
every entry is independently $1$ or $-1$ with probability $1/2$ is
substantially easier via a combination of an $\eps$-net and a
Hoeffding bound. In more detail, we approximate the unit ball with a
finite (exponential) set of vectors.  For each vector $x$ in the
discretization, the Hoeffding inequality gives an exponential tail
bound on $A(x,\ldots,x)$.  A union bound over all points in the
discretization then completes the proof.  For the parity tensor,
however, the Hoeffding bound does not apply as the entries are not
independent. Moreover, all the $\binom{n}{r}$ entries of the tensor
are fixed by just the $\binom{n}{2}$ edges of the graph. In spite of
this heavy inter-dependence, it turns out that $A(x,\ldots,x)$ does
concentrate. Our proof is inductive and bounds the norms of vectors
encountered in a certain decomposition of the tensor polynomial.

Using Theorem \ref{thrm:r-tensor-norm}, we can show that if the norm
problem can be solved for tensors of dimension $r$, one can find
planted cliques of size as low as $Cn^{1/r}\Poly(r,\log n)$.  While
the norm of the parity tensor for a random graph remains bounded, when
a clique of size $p$ is planted, the norm becomes at least $p^{r/2}$
(using the indicator vector of the clique).  Therefore, $p$ only needs
to be a little larger than $n^{1/r}$ in order for the the clique to
become the dominant term in the maximization of $A(x,\ldots,x)$.  More
precisely, we have the following theorem.
\begin{theorem}\label{thrm:reduction}
Let $G$ be random graph $G_{n,1/2}$ with a planted clique of size $p$,
and let $A$ be the $r$-parity tensor for $G$. For $\alpha \leq 1$, let
$T(n,r)$ be the time to compute a vector $x$ such that $A(x,\ldots,x)
\ge \alpha^r\|A\|_2$ whp.  Then, for $p$ such that
\[
n \geq p > C_0 \alpha^{-2} r^5 n^{1/r} \log^3n,
\]
the planted clique can be recovered with high probability in time
$T(n,r)+ \Poly(n)$, where $C_0$ is a fixed constant.
\end{theorem}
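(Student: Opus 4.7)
The plan is to separate the planted clique's contribution to $A$ into an explicit rank-one piece, bound the remainder using Theorem \ref{thrm:r-tensor-norm}, and then recover $K$ from the oracle vector by a standard degree-based cleanup.

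\emph{Step 1 (lower bound on $\|A\|_2$).} Evaluate $A$ at the normalized clique indicator $e = \mathbf{1}_K/\sqrt{p}$. Because every within-clique edge is $+1$ after planting, each of the $p(p-1)\cdots(p-r+1)$ distinct ordered $r$-tuples in $K^r$ contributes $1/p^{r/2}$ while all other tuples contribute $0$, so $\|A\|_2 \ge A(e,\ldots,e) \ge (1-O(r^2/p))\,p^{r/2}$. Hence the oracle's vector $x$ satisfies $A(x,\ldots,x) \ge \alpha^r p^{r/2}(1-o(1))$.

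\emph{Step 2 (residual bound).} Write $A = B + C$, where $B_{k_1,\ldots,k_r} = 1$ iff the $k_i$ are distinct and all lie in $K$, and $C = A - B$. Every nonzero entry of $C$ has at least one index outside $K$, so the product $\prod_{i<j} E_{k_ik_j}$ defining that entry involves at least one genuinely random $\pm 1$ edge (an edge with at least one endpoint in $V\setminus K$, untouched by planting). The target is $\|C\|_2 \le M'$ whp with $M'$ no larger than the Theorem \ref{thrm:r-tensor-norm} bound up to an $r^{O(r)}$ factor. Split $C(x,\ldots,x) = \sum_{S \subsetneq [r]} T_S(x)$, where $T_S$ collects the tuples whose indices lie in $K$ exactly at positions $S$. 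The piece $T_\emptyset$ is literally the parity tensor of the induced random graph on $V\setminus K$, handled directly by Theorem \ref{thrm:r-tensor-norm}. For each mixed piece with $1\le |S|\le r-1$, fixing the $K$-side indices leaves a multilinear form in $x|_{V\setminus K}$ whose coefficients are products of random edges between $K$ and $V\setminus K$ together with the random edges inside $V\setminus K$, and the same inductive concentration argument that proves Theorem \ref{thrm:r-tensor-norm} bounds this form by $\sqrt{n}\log^{O(r)}n\cdot\|x|_K\|^{|S|}\|x|_{V\setminus K}\|^{r-|S|}$. Summing the $2^r-1$ pieces gives the claimed bound on $\|C\|_2$. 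I expect this to be the main technical obstacle, since the entries of $C$ are heavily dependent (sharing edges), so one has to verify that the inductive argument of Theorem \ref{thrm:r-tensor-norm} ports to each mixed piece with only cosmetic modifications.

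\emph{Step 3 (mass on $K$ and cleanup).} Combining Steps 1--2 gives $B(x,\ldots,x) \ge \alpha^r p^{r/2} - M'$, and since $|B(x,\ldots,x)| \le \bigl(\sum_{i\in K}|x_i|\bigr)^r \le p^{r/2}\|x|_K\|^r$ by Cauchy--Schwarz, the lower bound $p > C_0 \alpha^{-2} r^5 n^{1/r}\log^3 n$ is calibrated precisely so that $M' \le \tfrac12 \alpha^r p^{r/2}$, which forces $\|x|_K\| \ge \alpha/2$. Moreover, since $B(x,\ldots,x) = r!\,e_r(x|_K)$ can attain this size only if $x|_K$ is spread over $\Omega(\alpha^2 p)$ coordinates of typical magnitude $\Theta(1/\sqrt{p})$, the set $S$ of the top $\Theta(p/\alpha^2)$ coordinates of $|x|$ satisfies $|S\cap K|\ge \Omega(p)$ whp. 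A standard degree-based cleanup then recovers $K$ in $\Poly(n)$ additional time: each $v\in K$ is adjacent to all of $S\cap K$, whereas a typical $v\notin K$ has only about $|S|/2$ neighbors in $S$ by Chernoff, so thresholding each vertex's neighbor-count in $S$ separates $K$ from its complement, exactly as in the Alon--Krivelevich--Sudakov algorithm.
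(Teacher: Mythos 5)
Your Steps 1--2 are roughly in the spirit of what the paper does (the paper also splits the polynomial into $2^r$ pieces by which index positions land inside the planted set), but Step 3 contains a genuine gap in the $\alpha$-dependence that the paper's indicator decomposition is specifically designed to avoid.

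Concretely: from $\|x|_K\| \ge \alpha/2$ and $\bigl(\sum_{i\in K}|x_i|\bigr)\ge \alpha\sqrt{p}/4$ together with $\|x|_K\|\le 1$, the standard thresholding argument gives that at least $\Omega(\alpha^2 p)$ coordinates of $K$ have $|x_i|\ge \alpha/(c\sqrt{p})$, and the set $S$ of coordinates above this threshold has $|S|\le O(p/\alpha^2)$. So the correct conclusion is $|S\cap K| \ge \Omega(\alpha^2 p)$ --- not the $\Omega(p)$ you claim. Plugging this into the degree-cleanup condition (which in the paper is Lemma~\ref{thrm:recover}: $|S\cap P|\ge C_5\sqrt{|S|\log n}$) gives $\alpha^2 p\gtrsim\sqrt{(p/\alpha^2)\log n}$, i.e.\ $p\gtrsim\alpha^{-6}\log n$. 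That is strictly weaker than the theorem's $p\gtrsim\alpha^{-2} r^5 n^{1/r}\log^3 n$ whenever $\alpha$ is moderately small, and in particular it fails in exactly the regime that the Corollary following the theorem exercises, where $\alpha = n^{1/2r+\eps/2-1/4}$ tends to zero polynomially. So your single-threshold selection of $S$ cannot recover the stated theorem.

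The paper avoids this by never committing to a single threshold. It applies the indicator decomposition $x\mapsto\{y^{(j)}(x)\}$ from Section~\ref{sec:discretize} and defines the ``bad'' set $B'$ to consist of vectors for which \emph{every} dyadic level $S_j=\Supp(y^{(j)}(x))$ has $|S_j\cap P|<C_5\sqrt{|S_j|\log n}$. Lemma~\ref{thrm:B-bound} then bounds $\sup_{x\in B'}A(x,\ldots,x)$, and the algorithm simply tries all $O(r\log n)$ levels; only one of them needs to satisfy the recovery threshold. Because the bound $p^{r/2}\alpha^r > \|A|_{B'}\|$ is what drives the calculation, the $\alpha$-dependence stays at $\alpha^{-2}$ (take $2/r$-th powers). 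Replacing your Step~3 with this multi-scale selection is not a cosmetic change; it is the load-bearing idea.

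A secondary caution about your Step~2: you propose to bound $\|C\|_2$ uniformly over all unit vectors, but the mixed blocks (those with some positions in $K$) contain within-$K$ edges that are deterministically $+1$, so the inductive concentration argument does not ``port with only cosmetic modifications.'' The paper sidesteps this by working only with $x^{(i)}\in U'$ (indicator vectors with small $P$-overlap), which gives the explicit $\ell_1$ bound $\|x^{(i)}|_P\|_1 < C_5\sqrt{\log n}$ used to control the $P$-side factor $(C_5\sqrt{\log n})^\ell$ in Lemma~\ref{thrm:B-bound}. You would need a genuine replacement for that restriction before the norm of the residual tensor $C$ can be controlled uniformly; it is not clear that $\|C\|_2=\tilde O(\sqrt{n})$ holds over the whole unit ball, and proving it would be a separate piece of work rather than an application of Theorem~\ref{thrm:r-tensor-norm}.
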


On one hand, this highlights the benefits of finding an efficient
(approximation) algorithm for the tensor problem.  On the other, given
the lack of progress on the clique problem, this is perhaps evidence
of the hardness of the tensor maximization problem even for a natural
class of random tensors.  For example, if finding a clique of size
$\tilde{O}(n^{1/2 - \epsilon})$ is hard, then by setting $\alpha =
n^{1/2r + \epsilon/2 - 1/4}$ we see that even a certain polynomial
approximation to the norm of the parity tensor is hard to achieve.
\begin{corollary}
Let $G$ be random graph $G_{n,1/2}$ with a planted clique of size $p$,
and let $A$ be the $r$-parity tensor for $G$.  Let $\epsilon > 0$ be a
small constant and let $T(n,r)$ be the time to compute a vector $x$
such that $A(x,\ldots,x) \ge n^{1/2 + r\epsilon/2 -
r/4}\|A\|_2$. Then, for
\[
p \geq C_0 r^5 n^{\frac{1}{2}-\epsilon} \log^3n,
\]
the planted clique can be recovered with high probability in time
$T(n,r)+ \Poly(n)$, where $C_0$ is a fixed constant.
\end{corollary}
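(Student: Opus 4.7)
The corollary is essentially a direct substitution into Theorem~\ref{thrm:reduction}, so the plan is to identify the right choice of $\alpha$ and verify that the hypotheses match. First I would set
\[
\alpha = n^{1/(2r) + \epsilon/2 - 1/4},
\]
so that $\alpha^r = n^{1/2 + r\epsilon/2 - r/4}$, which is exactly the approximation factor appearing in the hypothesis of the corollary. Thus an algorithm running in time $T(n,r)$ that produces a vector $x$ with $A(x,\ldots,x) \geq n^{1/2 + r\epsilon/2 - r/4}\|A\|_2$ is, with this identification, precisely an algorithm satisfying the hypothesis of Theorem~\ref{thrm:reduction} for this value of $\alpha$.

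Next I would verify the threshold on $p$. Computing $\alpha^{-2} = n^{-1/r - \epsilon + 1/2}$ and multiplying by $n^{1/r}$, one finds
\[
\alpha^{-2}\, n^{1/r} = n^{1/2 - \epsilon},
\]
so the bound $p > C_0 \alpha^{-2} r^5 n^{1/r} \log^3 n$ from Theorem~\ref{thrm:reduction} becomes $p > C_0 r^5 n^{1/2 - \epsilon}\log^3 n$, matching the hypothesis of the corollary exactly. The bound $p \leq n$ from Theorem~\ref{thrm:reduction} is implicit (it must hold for a clique to exist), and for any fixed $\epsilon > 0$ and sufficiently large $n$ the stated lower bound on $p$ is consistent with this.

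Once these two checks are in place, Theorem~\ref{thrm:reduction} applies verbatim and outputs the planted clique with high probability in time $T(n,r) + \Poly(n)$, which is the conclusion of the corollary. Since the argument is pure substitution, there is no real obstacle; the only thing to be slightly careful about is that the choice $\alpha = n^{1/(2r)+\epsilon/2-1/4}$ is at most $1$ (as required by Theorem~\ref{thrm:reduction}), which holds whenever $r \geq 2$ and $\epsilon < 1/2 - 1/r$, conditions that are compatible with the ``small constant $\epsilon$'' assumption in the statement.
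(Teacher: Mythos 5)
Your proof is correct and takes the same route the paper itself indicates: the corollary is obtained from Theorem~\ref{thrm:reduction} by the substitution $\alpha = n^{1/(2r)+\epsilon/2-1/4}$, and the paper even states this choice of $\alpha$ explicitly in the paragraph preceding the corollary. One small clarification to your last remark: the condition $\epsilon < 1/2 - 1/r$ rules out $r = 2$ for any $\epsilon > 0$, so the nonvacuous range is $r \geq 3$ with $\epsilon$ sufficiently small; for $r = 2$ the hypothesis of the corollary cannot be satisfied (since it would require $A(x,x) > \|A\|_2$), so the statement is vacuously true there.
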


\subsection{Overview of analysis}
The majority of the paper is concerned with proving Theorem
\ref{thrm:r-tensor-norm}.  In Section \ref{sec:discretize}, we first
reduce the problem of bounding $A(\cdot)$ over the unit ball to
bounding it over a discrete set of vectors that have the same value in
every non-zero coordinate.  In Section \ref{sec:off-diag}, we further
reduce the problem to bounding the norm of an off-diagonal block of
$A$, using a method of Frieze and Kannan.  This enables us to assume
that if $(k_1,\ldots,k_r)$ is a valid index, then the random variables
$E_{k_i,k_j}$ used to compute $A_{i_1,\ldots,i_r}$ are independent.
In Section \ref{sec:concentration}, we prove a large deviation
inequality (Lemma \ref{thrm:concentration}) that allows us to bound
norms of vectors encountered in a certain decomposition of the tensor
polynomial.  This inequality gives us a considerably sharper bound
than the Hoeffding or McDiarmid inequalities in our context.  We then
apply this lemma to bound $\|A\|_2$ for $r=3$ as a warm-up and then
give the proof for general $r$ in Section \ref{sec:r-tensor-norm}.

In Section \ref{sec:reduction} we prove Theorem \ref{thrm:reduction}.
We first show that any vector $x$ that comes close to maximizing
$A(\cdot)$ must be close to the indicator vector of the clique
(Lemma \ref{thrm:U-approx}).  Finally, we show that given such a
vector it is possible to recover the clique (Lemma
\ref{thrm:recover}).

\section{Preliminaries}

\subsection{Discretization}\label{sec:discretize}
The analysis of $A(x,\ldots,x)$ is greatly simplified when $x$ is
proportional to some indicator vector.  Fortunately, analyzing these
vectors is sufficient, as any vector can be approximated as a linear
combination of relatively few indicator vectors.

For any vector $x$, we define $x^{(+)}$ to be vector such that
$x^{(+)}_i = x_i$ if $x_i > 0$ and $x^{(+)}_i = 0$ otherwise.
Similarly, let $x^{(-)}_i = x_i$ if $x_i < 0$ and $x^{(-)}_i = 0$
otherwise.  For a set $S \subseteq [n]$, let $\chi^S$ be the indicator
vector for $S$, where the $ith$ entry is $1$ if $i \in S$ and $0$
otherwise.

\begin{definition}[Indicator Decomposition]
For a unit vector $x$, define the sets $S_1,\ldots$ and $T_1,\ldots$
through the recurrences
\[
S_j = \left\{ i \in [n] : (x^{(+)} -  \sum_{k=1}^{j-1} 2^{-k} \chi^{S_k})_i 
> 2^{-j} \right\}.
\]
and 
\[
T_j = \left\{ i \in [n] : (x^{(-)} -  \sum_{k=1}^{j-1} 2^{-k} \chi^{S_k})_i 
< -2^{-j} \right\}.
\]
Let $y_0(x) = 0$. For $j \geq 1$, let $y^{(j)}(x) = 2^{-j} \chi^{S_j}$ and let
$y^{(-j)}(x) = -2^{-j} \chi^{T_j}$.  We call the set
$\{y^{(j)}(x)\}_{-\infty}^\infty$ the \emph{indicator decomposition of $x$}.
\end{definition}

Clearly,
\[
\|y^{(i)}(x)\| \leq \max \{\|x^{(+)}\|,\|x^{(-)}\|\} \leq 1.
\]
and 
\[
\left\|x - \sum_{j=-N}^N y^{(j)}(x) \right\| \leq \sqrt{n} 2^{-N}.
\]
We use this decomposition to prove the following theorem.

\begin{lemma}\label{thrm:U-approx}
Let
\[
U = \{ k |S|^{-1/2} \chi^S : S \subseteq [n], k \in \{-1,1\}\}.
\]
For any tensor $A$ over $[n]^r$ where $\|A\|_\infty \leq 1$
\[
\max_{x^{(1)},\ldots,x^{(r)} \in B(0,1)} A(x^{(1)},\ldots x^{(r)})
\leq (2 \lceil r \log n \rceil )^r \max_{x^{(1)}, \ldots, x^{(r)} \in U}
A(x^{(1)},\ldots,x^{(r)})
\]
\end{lemma}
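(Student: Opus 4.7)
The plan is to approximate each unit vector $x^{(\ell)}$ by a truncated indicator decomposition and then invoke the multilinearity of $A$. Fix unit vectors $x^{(1)},\ldots,x^{(r)}$ (WLOG by multilinearity) and set $N := \lceil r\log n\rceil$ and $M := \max_{u^{(\ell)}\in U} A(u^{(1)},\ldots,u^{(r)})$; note that $M\ge 0$ because $U$ is closed under negation. Writing $x^{(\ell)} = \tilde x^{(\ell)} + \rho^{(\ell)}$ with $\tilde x^{(\ell)} := \sum_{1\le|j|\le N} y^{(j)}(x^{(\ell)})$, the truncation estimate stated in the excerpt yields $\|\rho^{(\ell)}\|_2 \le \sqrt n\,2^{-N} \le n^{1/2-r}$.

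I would then expand $A(x^{(1)},\ldots,x^{(r)})$ multilinearly into $A(\tilde x^{(1)},\ldots,\tilde x^{(r)})$ plus $2^r - 1$ mixed terms containing at least one $\rho^{(\ell)}$, and expand the first piece further into the sum over $(j_1,\ldots,j_r)\in\{\pm 1,\ldots,\pm N\}^r$ of $A(y^{(j_1)}(x^{(1)}),\ldots,y^{(j_r)}(x^{(r)}))$, giving $(2N)^r$ summands. Each $y^{(j)}(x^{(\ell)})$ equals $\alpha_{j,\ell}\,u_{j,\ell}$ with $u_{j,\ell}\in U$ (the signed normalized indicator of $S_{|j|}$ or $T_{|j|}$) and $|\alpha_{j,\ell}| = \|y^{(j)}(x^{(\ell)})\|_2 \le 1$ by the first inequality in the excerpt, so multilinearity bounds each main-term summand in absolute value by $M$ and hence the main piece by $(2N)^r M$. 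For each mixed term I would use the crude bound $|A(v^{(1)},\ldots,v^{(r)})| \le \|A\|_\infty \prod_\ell \|v^{(\ell)}\|_1 \le n^{r/2} \prod_\ell \|v^{(\ell)}\|_2$; with at least one $\rho$-factor of norm $n^{1/2-r}$ and the other factors $O(1)$, the mixed terms contribute $O(n^{(1-r)/2}) = o(1)$. Since the inequality is homogeneous in $A$, I may normalize $\|A\|_\infty = 1$, which forces $M \ge 1$ via $A(\pm e_{i_1},\ldots,\pm e_{i_r}) = \pm A_{i_1\ldots i_r}$ applied at a maximal entry, and the error is absorbed into the constant.

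The main obstacle is the tension in that last step: the crude operator-style bound $n^{r/2}\prod\|v^{(\ell)}\|_2$ coming from $\|A\|_\infty \le 1$ is so loose that the truncation must satisfy $\sqrt n\,2^{-N} \lesssim n^{-r/2}$, forcing $N \ge \lceil r\log n\rceil$; this threshold on $N$ is precisely what fixes the $(2\lceil r\log n\rceil)^r$ shape of the lemma's constant and leaves no room for further slack in the main-term count.
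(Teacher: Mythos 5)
Your proposal is correct and follows essentially the same route as the paper's proof: truncate the indicator decomposition at $N=\lceil r\log n\rceil$, expand by multilinearity, bound the main piece by $(2N)^r M$ after reducing each $y^{(j)}$ to a scalar multiple $\alpha u$ with $u\in U$ and $|\alpha|\le 1$ (using that $U$ is sign-symmetric so $M\ge 0$), and control the residual using the $\ell_1$--$\ell_\infty$ bound $|A(v^{(1)},\ldots,v^{(r)})|\le n^{r/2}\prod_\ell\|v^{(\ell)}\|_2$. The paper organizes the residual estimate as a geometric series $\sum_{i\ge 1}\epsilon^i r^i\|A\|$ rather than your explicit $2^r-1$ mixed terms, but this is cosmetic; your observation that $M\ge 1$ after normalizing $\|A\|_\infty=1$ makes the absorption of the $o(1)$ error slightly more explicit than the paper's treatment.
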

\begin{proof}
Consider a fixed set of vectors $x^{(1)},\ldots,x^{(r)}$ and let $N =
\lceil r \log_2 n \rceil$.  For each $i$, let
\[
\hat{x^{(i)}} = \sum_{j=-N}^N y^{(j)}(x^{(i)}).
\]

We first show that replacing $x^{(i)}$ with $\hat{x^{(i)}}$ gives a good
approximation to $A(x^{(1)},\ldots, x^{(r)})$.  Letting $\epsilon$ be
the maximum difference between an $x^{(i)}$ and its approximation, we
have that
\[
\max_{i \in [r]} \|x^{(i)} - \hat{x^{(i)}}\| = \epsilon \leq \frac{n^{r/2}}{2r}
\]
Because of the multilinear form of $A(\cdot)$ we have
\begin{eqnarray*}
| A(x^{(1)},\ldots,x^{(r)}) - A(\hat{x^{(1)}},\ldots,\hat{x^{(r)})}| & \leq &
\sum_{i=1}^r \epsilon^i r^i \|A\|\\
& \leq & \frac{\epsilon r}{1 - \epsilon r} \|A\|\\
& \leq & n^{-r/2} \|A\|\\
& \leq & 1.
\end{eqnarray*}

Next, we bound $A(\hat{x^{(1)}},\ldots, \hat{x^{(r)})}$. For
convenience, let $Y^{(i)} = \cup_{j=-N}^N y^{(j)}(x^{(i)})$.  Then using the
multlinear form of $A(\cdot)$ and bounding the sum by its maximum term, we have
\begin{eqnarray*}
A(\hat{x}^{(1)},\ldots, \hat{x}^{(r)}) 
& \leq & (2N)^r
\max_{v^{(1)} \in Y^{(1)},\ldots,v^{(r)} \in Y^{(r)}} 
A(v^{(1)},\ldots, v^{(r)}) \\
& \leq & (2N)^r
\max_{v^{(1)},\ldots,v^{(r)} \in U} A(v^{(1)},\ldots, v^{(r)}).
\end{eqnarray*}
\end{proof}

\subsection{Sufficiency of off-diagonal blocks}\label{sec:off-diag}
Analysis of $A(x^{(1)},\ldots,x^{(r)})$ is complicated by the fact
that all terms with repeated indices are zero.  Off-diagonal blocks of
$A$ are easier to analyze because no such terms exist.  Thankfully, as
Frieze and Kannan \cite{FK08} have shown, analyzing these off-diagonal
blocks suffices.  Here we generalize their proof to $r > 3$.

For a collection $\{V_1,V_2,\ldots,V_r\}$ of subsets of
$[n]$, we define
\[
A|_{V_1 \times \ldots \times V_r} (x^{(1)},\ldots,x^{(r)}) =
\sum_{k_1 \in V_1, \ldots, k_r \in V_r}
A_{k_1\ldots k_r} x^{(1)}_{i_1} x^{(2)}_{i_2} \ldots x^{(r)}_{i_r}
\]

\begin{lemma}\label{thrm:off-diag}
Let $P$ be the class of partitions of $[n]$ into $r$ equally sized
sets $V_1,\ldots, V_r$ (assume wlog that $r$ divides $n$).  Let $V =
V_1 \times \ldots \times V_r$.  Let A be a random tensor over $[n]^r$
where each entry is in $[-1,1]$ and let $R \subseteq B(0,1)$.  If for
every fixed $(V_1,\ldots V_r) \in P$, it holds that
\[
\Pr[ \max_{x^{(1)},\ldots,x^{(r)} \in R} A|_V (x^{(1)},\ldots,x^{(r)}) \geq f(n) ] \leq \delta,
\]
then
\[
\Pr[ \max_{x^{(1)},\ldots,x^{(r)} \in R} A(x^{(1)},\ldots,x^{(r)}) \geq 2r^r f(n) ] \leq \frac{\delta n^{r/2}}{f(n)},
\]
\end{lemma}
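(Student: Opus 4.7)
The plan is to average $A|_V$ over a uniformly random balanced ordered partition $P=(V_1,\ldots,V_r)$ of $[n]$ and then pass from the resulting expectation bound to a probability bound via Markov plus Fubini. The key averaging identity is that for any fixed $A$ (with zero diagonal, as for the parity tensor) and any $x^{(1)},\ldots,x^{(r)}$,
\[
\mathbb{E}_P\bigl[A|_V(x^{(1)},\ldots,x^{(r)})\bigr]\;=\;q\cdot A(x^{(1)},\ldots,x^{(r)}),
\]
where $q=(n/r)^r/[n(n-1)\cdots(n-r+1)]\ge 1/r^r$ is the probability that a fixed $r$-tuple of distinct indices is placed in $V_1,\ldots,V_r$ respectively; tuples with a repeated index contribute zero because the $V_i$'s are pairwise disjoint. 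Dividing by $q$ and swapping $\max$ with $\mathbb{E}_P$ via $\max\mathbb{E}\le\mathbb{E}\max$ yields the deterministic bound
\[
\max_{x^{(i)}\in R}A(x^{(1)},\ldots,x^{(r)})\;\le\;\frac{1}{q}\,\mathbb{E}_P\bigl[M_P(A)\bigr],\qquad M_P(A)\;:=\;\max_{x^{(i)}\in R}A|_V(x^{(1)},\ldots,x^{(r)}).
\]

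To convert this into the required probability bound I would use the trivial Frobenius estimate $M_P\le\|A|_V\|_F\le(n/r)^{r/2}\le n^{r/2}$, which follows from Cauchy-Schwarz together with $\|A\|_\infty\le 1$ and $x^{(i)}\in B(0,1)$. The event $\{\max_x A\ge 2r^r f(n)\}$ then forces $\mathbb{E}_P[M_P]\ge 2r^rq\,f(n)\ge 2f(n)$, and splitting this expectation at the threshold $f(n)$ gives
\[
2f(n)\;\le\;\mathbb{E}_P[M_P]\;\le\;n^{r/2}\cdot\Pr_P\bigl[M_P\ge f(n)\bigr]+f(n),
\]
so $\beta(A):=\Pr_P[M_P\ge f(n)]\ge f(n)/n^{r/2}$. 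Applying Markov to the nonnegative random variable $\beta(A)$ and then Fubini to swap the $A$- and $P$-expectations yields
\[
\Pr_A\!\left[\beta(A)\ge\frac{f(n)}{n^{r/2}}\right]\;\le\;\frac{n^{r/2}}{f(n)}\,\mathbb{E}_A[\beta(A)]\;=\;\frac{n^{r/2}}{f(n)}\,\mathbb{E}_P\Pr_A\bigl[M_P\ge f(n)\bigr]\;\le\;\frac{\delta\,n^{r/2}}{f(n)},
\]
where the last inequality uses the lemma's hypothesis for each fixed partition. Chaining the implications above gives the claim.

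The main subtle step is the $\max$-$\mathbb{E}$ exchange, which is what costs the factor $r^r$: a random balanced partition sends any specified $r$-tuple of distinct indices into the designated product $V_1\times\cdots\times V_r$ with probability only about $1/r^r$, so recovering $A$ from a single random off-diagonal block necessarily pays this factor. The remainder is routine; the multiplicative loss of $n^{r/2}/f(n)$ in the final probability is exactly the ratio between the trivial Frobenius bound on $M_P$ and the nontrivial hypothesis bound, which is what one should expect from a Markov-style passage of this sort. The zero-diagonal assumption on $A$ (satisfied by the parity tensor of interest) is what makes the averaging identity clean and is implicitly needed when the lemma is applied later.
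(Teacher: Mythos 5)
Your proof is correct and follows essentially the same route as the paper's: both average $A$ over random balanced ordered partitions (paying a factor $r^r$), split partitions according to whether $\max_x A|_V$ falls below or above $f(n)$, use the trivial $n^{r/2}$ bound on the bad ones, and close with Markov's inequality. Your write-up is somewhat more explicit — you state the averaging identity with the exact constant $q$, note the need for $A$ to vanish on repeated indices, and cleanly package the Markov step as a bound on $\beta(A)=\Pr_P[M_P\ge f(n)]$ (which the paper denotes $|\bar G|/|P|$, though its final display has a typo writing $|G|$ in place of $|\bar G|$) — but the underlying argument is the same.
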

\begin{proof}[Proof of Lemma \ref{thrm:off-diag}]
Each $r$-tuple appears in an equal number of partitions and this
number is slightly more than a $r^{-r}$ fraction of the total. Therefore,
\begin{eqnarray*}
\left| A(x^{(1)},\ldots A(x^{(r)}) \right| & \leq & \frac{r^r}{|P|}
\left| \sum_{\{V_1,\ldots,V_r\} \in P} A|_V (x^{(1)},\ldots A(x^{(r)})\right|\\
& \leq & \frac{r^r}{|P|}
\sum_{\{V_1,\ldots,V_r\} \in P} \left| A|_V (x^{(1)},\ldots A(x^{(r)})\right|
\end{eqnarray*}
We say that a partition $\{V_1,\ldots,V_r\}$ is good if
\[
\max_{x^{(1)},\ldots,x^{(r)} \in R} A|_V (x^{(1)},\ldots,x^{(r)})
< f(n).
\]
Let the good partitions be denoted by $G$ and let $\bar{G} = P
\setminus G$.  Although the $f$ upper bound does not hold for
partitions in $\bar{G}$, the trivial upper bound of $n^{r/2}$ does
(recall that every entry in the tensor is in the range $[-1,1]$ and $R
\subseteq B(0,1)$).  Therefore
\[
\left| A(x^{(1)},\ldots A(x^{(r)}) \right| \leq r^r ( \frac{|G|}{|P|}
f + \frac{|\bar{G}|}{|P|} n^{r/2}).
\]
Since $E[|G|/|P|] = \delta$ by hypothesis, Markov's inequality gives
\[
\Pr[ \frac{|G|}{|P|} n^{r/2} > f] \leq \frac{\delta n^{r/2}}{f}
\]
and thus proves the result.
\end{proof}

\subsection{A concentration bound}\label{sec:concentration}
The following concentration bound is a key tool in our proof of
Theorem \ref{thrm:r-tensor-norm}.  We apply it for $t = \tilde{O}(N)$.
\begin{lemma}\label{thrm:concentration}
Let $\{u^{(i)}\}_{i=1}^{N}$ and $\{v^{(i)}\}_{i=1}^{N}$ be
collections of vectors of dimension $N'$ where each entry of $u^{(i)}$ is
$1$ or $-1$ with probability $1/2$ and $\|v^{(i)}\|_2 \leq 1$.  Then
for any $t \ge 1$,
\[
\Pr[ \sum_{i=1}^N (u^{(i)} \cdot v^{(i)})^2 \geq t]
\leq e^{-t/18} (4\sqrt{e\pi})^N.
\]
\end{lemma}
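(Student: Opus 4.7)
The plan is to control the upper tail of $S := \sum_{i=1}^{N}(u^{(i)} \cdot v^{(i)})^{2}$ by a standard Chernoff / moment-generating-function argument, exploiting that the blocks $u^{(1)},\ldots,u^{(N)}$ are independent of each other and that the entries within each block are independent $\pm 1$'s.

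First I would bound the MGF of a single summand. Set $X_{i} := u^{(i)} \cdot v^{(i)}$. By the elementary inequality $\cosh(s) \leq e^{s^{2}/2}$,
\[
\mathbb{E}[e^{s X_{i}}] = \prod_{j} \cosh(s\, v^{(i)}_{j}) \leq e^{s^{2} \|v^{(i)}\|^{2}/2} \leq e^{s^{2}/2},
\]
so $X_{i}$ is sub-Gaussian with parameter at most $1$. I would then convert this into a bound on $\mathbb{E}[e^{\lambda X_{i}^{2}}]$ via the Gaussian decoupling identity $e^{\lambda x^{2}} = \mathbb{E}_{g \sim N(0,2\lambda)}[e^{g x}]$ (valid for all $\lambda > 0$): swapping expectations by Fubini and applying the previous display gives
\[
\mathbb{E}[e^{\lambda X_{i}^{2}}] \leq \mathbb{E}_{g \sim N(0,2\lambda)}[e^{g^{2}/2}] = (1-2\lambda)^{-1/2},
\]
valid for $0 < \lambda < 1/2$, the last equality being a direct Gaussian integral.

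With the single-term MGF in hand, independence across $i$ gives $\mathbb{E}[e^{\lambda S}] \leq (1-2\lambda)^{-N/2}$, and Markov's inequality yields
\[
\Pr[S \geq t] \leq e^{-\lambda t}(1-2\lambda)^{-N/2}.
\]
Choosing $\lambda = 1/18$ produces $e^{-t/18}(9/8)^{N/2}$, comfortably inside the target $e^{-t/18}(4\sqrt{e\pi})^{N}$, so even a crude choice of $\lambda$ suffices.

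I do not anticipate any serious obstacle: the only judgment call is the value of $\lambda$, and the striking slack in the stated constant (the per-term factor $4\sqrt{e\pi} \approx 11.7$ against the $\sqrt{9/8} \approx 1.06$ that the Gaussian identity delivers) suggests the authors may prefer a slightly looser but more elementary route --- for example, bounding $\mathbb{E}[e^{\lambda X_{i}^{2}}]$ by tail integration against the Hoeffding estimate $\Pr[|X_{i}| \geq s] \leq 2e^{-s^{2}/2}$, which produces a $\sqrt{e\pi}$-type constant once one separates small and large values of $|X_{i}|$. Either path lands at a bound of the claimed form, with the exponent $-t/18$ pinned down by $\lambda = 1/18$.
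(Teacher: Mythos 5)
Your proof is correct, and it takes a genuinely different route from the paper's. The paper's argument is geometric: it rewrites $\sqrt{\sum_i (u^{(i)}\cdot v^{(i)})^2}$ as the maximum of $\sum_{i,j} y_i u^{(i)}_j v^{(i)}_j$ over unit vectors $y$, replaces the sphere by a half-integer lattice $\epsilon$-net (Claim \ref{thrm:e-net-approx}), applies Hoeffding's inequality to the doubly-indexed sum of bounded independent terms for each fixed net point, and union-bounds over the net; the factor $(4\sqrt{e\pi})^N$ is precisely the cardinality of that net (Claim \ref{thrm:e-net-size}). Your argument is analytic: a per-block sub-Gaussian estimate from $\cosh(s)\le e^{s^2/2}$, the Gaussian decoupling identity to pass from $\mathbb{E}[e^{sX_i}]$ to $\mathbb{E}[e^{\lambda X_i^2}]\le (1-2\lambda)^{-1/2}$, a product over independent blocks, and Markov. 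Both proofs rest on the same implicit independence assumptions (entries of the $u^{(i)}$ independent across all $(i,j)$, and the $v^{(i)}$ fixed or independent of the $u$'s), so your route is not more general, but it is sharper: with $\lambda=1/18$ you get $e^{-t/18}(9/8)^{N/2}$, which beats the paper's $e^{-t/18}(4\sqrt{e\pi})^N$ by an exponential factor, essentially because Chernoff bounding avoids paying for a net altogether. Either bound suffices for every downstream application in the paper, where $t=\Theta(N\log n)$ and the per-index factor is absorbed, but your version is the cleaner estimate.
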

Before giving the proof, we note that this lemma is stronger than what
a naive application of standard theorems would yield for $t =
\tilde{O}(N)$.  For instance, one might treat each $(u^{(i)} \cdot
v^{(i)})^2$ as an independent random variable and apply a Hoeffding
bound.  The quantity $(u^{(i)} \cdot v^{(i)})^2$ can vary by as much
as $N'$, however, so the bound would be roughly $\exp(-ct^2/N{N'}^2)$
for some constant $c$.  Similarly, treating each $u^{(i)}_j$ as an
independent random variable and applying McDiarmid's inequality, we
find that every $u^{(i)}_j$ can affect the sum by as much as $1$
(simultaneously).  For instance suppose that every $v^{(i)}_j =
1/\sqrt{N'}$ and every $u^{(i)}_j = 1$.  Then flipping $u^{(i)}_j$
would have an effect of $|N' - ((N'-2)/\sqrt{N'})^2| \approx 4$, so
the bound would be roughly $\exp(-ct^2/NN')$ for some constant $c$.

\begin{proof}[Proof of Lemma \ref{thrm:concentration}]
Observe that
\(
\sqrt{\sum_{i=1}^N (u^{(i)} \cdot v^{(i)})^2}
\)
is the length of the vector whose $i$th coordinate is $u^{(i)} \cdot v^{(i)}$.
Therefore, this is also equivalent to the maximum projection of this vector onto a unit vector:
\[
\sqrt{\sum_{i=1}^N (u^{(i)} \cdot v^{(i)})^2} = \max_{y \in
 B(0,1)} \sum_{i=1}^N \sum_{j=1}^{N'} y_i u^{(i)}_j v^{(i)}_j.
\]

We will use an $\epsilon$-net to approximate the unit ball and give an
upper bound for this quantity.  Let $\mathcal{L}$ be the lattice
$\left(\frac{1}{2\sqrt{N}} \mathbb{Z}\right)^N$.
\begin{claim}\label{thrm:e-net-approx}
For any vector $x$,
\[
\|x\|_2 \leq 2 \max_{y \in \mathcal{L} \cap B(0,3/2)} y \cdot x.
\]
\end{claim}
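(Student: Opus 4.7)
The plan is a standard $\epsilon$-net/rounding argument. Since the inequality is trivial when $x=0$, I will assume $x \neq 0$ and exhibit a specific lattice point $y \in \mathcal{L} \cap B(0,3/2)$ that witnesses the desired lower bound; the maximum over $\mathcal{L} \cap B(0,3/2)$ can then only be larger. Specifically, I would take $y$ to be the coordinatewise rounding of the unit vector $x/\|x\|_2$ to the nearest point of $\mathcal{L}$, i.e. each coordinate of $x/\|x\|_2$ rounded to the nearest multiple of $1/(2\sqrt{N})$.

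The two things to check are that this $y$ lies in $B(0,3/2)$ and that $y \cdot x$ is at least $\|x\|_2/2$. For the first, set $z = y - x/\|x\|_2$. Each coordinate of $z$ has absolute value at most $1/(4\sqrt{N})$, so
\[
\|z\|_2 \;\leq\; \sqrt{N \cdot \tfrac{1}{16N}} \;=\; \tfrac{1}{4},
\]
and hence $\|y\|_2 \leq \|x/\|x\|_2\|_2 + \|z\|_2 \leq 1 + 1/4 \leq 3/2$. For the second, write
\[
y \cdot x \;=\; \bigl(x/\|x\|_2\bigr) \cdot x + z \cdot x \;=\; \|x\|_2 + z \cdot x,
\]
and apply Cauchy--Schwarz to obtain $|z \cdot x| \leq \|z\|_2 \|x\|_2 \leq \|x\|_2/4$. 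Therefore $y \cdot x \geq (3/4)\|x\|_2 \geq \|x\|_2/2$, which after rearrangement gives $\|x\|_2 \leq 2\, y \cdot x \leq 2 \max_{y' \in \mathcal{L} \cap B(0,3/2)} y' \cdot x$.

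There is no real obstacle here; the only thing to be careful about is choosing the lattice spacing and ball radius so that the rounding error in $\ell_2$ norm is at most $1/4$, which is exactly why the spacing $1/(2\sqrt{N})$ and radius $3/2$ appear in the statement. The factor $2$ on the right-hand side is slack (one could even get $4/3$), so no sharpness issues arise.
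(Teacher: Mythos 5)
Your proof is correct and takes essentially the same approach as the paper: normalize $x$, round coordinatewise to the nearest lattice point $y$, and bound the rounding error by $\|y - x/\|x\|\| \le 1/4$. The only difference is the final step, where you apply Cauchy--Schwarz directly to get $y\cdot x \ge (3/4)\|x\|_2$, whereas the paper passes through the angle between $x$ and $y$ (using $\cos\theta \ge \sqrt{15/16}$); your route is slightly cleaner and yields a marginally better constant, but the argument is the same.
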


Thus,
\[
\sqrt{\sum_{i=1}^N (u^{(i)} \cdot v^{(i)})^2} \leq
2 \max_{y \in \mathcal{L} \cap B(0,3/2)}
 \sum_{i=1}^N y_i \sum_{j=1}^{N'}  u^{(i)}_j v^{(i)}_j.
\]

Consider a \emph{fixed} $y \in \mathcal{L} \cap B(0,3/2)$.  Each
$u^{(j)}_i$ is $1$ or $-1$ with equal probability, so the expectation
for each term is zero.  The difference between the upper and lower
bounds for a term is
\[
2 | 2 y_j u^{(i)}_j v{(i)}_j | = 4 |y_j v{(i)}_j|
\]
Therefore,
\[
16 \sum_{i=1}^N \sum_{j=1}^{N'} (y_i u^{(i)}_j v{(i)}_j)^2
\leq 16 \sum_{i=1}^N y^2 \sum_{j=1}^{N'} (v{(i)}_j)^2 = 36.
\]

Applying the Hoeffding bound gives that
\[
\Pr[ \sum_{i=1}^N (u^{(i)} \cdot v^{(i)})^2 \geq t] \leq 
\Pr[ 2\sum_{i=1}^N y_i \sum_{j=1}^{N'} u^{(i)}_j v{(i)}_j \geq \sqrt{t} ]
\leq e^{-t/18}.
\]
The result follows by taking a union bound over $\mathcal{L} \cap B(0,3/2)$, whose cardinality is bounded according to Claim \ref{thrm:e-net-size}.
\end{proof}

\begin{claim}\label{thrm:e-net-size}
The number of lattice points in $\mathcal{L} \cap B(0,3/2)$ is at most $(4\sqrt{e\pi})^N$
\end{claim}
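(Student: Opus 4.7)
The plan is a standard volume-packing argument. To each lattice point $y \in \mathcal{L}$ I associate the open axis-aligned cube $C_y$ of side length $s = 1/(2\sqrt{N})$ centered at $y$. Since the lattice spacing equals $s$ in every coordinate, the cubes $\{C_y : y \in \mathcal{L}\}$ tile $\mathbb{R}^N$ and hence have pairwise disjoint interiors. Each cube has $\ell_2$ half-diameter $s\sqrt{N}/2 = 1/4$, so every point of $C_y$ lies within distance $1/4$ of $y$; consequently, if $y \in B(0,3/2)$ then $C_y \subseteq B(0, 7/4)$. Writing $L = |\mathcal{L} \cap B(0, 3/2)|$ for the quantity to be bounded, disjointness together with this containment gives
\[
L \cdot s^N \;\leq\; \Vol(B(0, 7/4)).
\]

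Next I would plug in the standard formula $\Vol(B(0,r)) = \pi^{N/2} r^N / \Gamma(N/2+1)$ together with Stirling's lower bound $\Gamma(N/2+1) \geq \sqrt{\pi N}\,(N/(2e))^{N/2}$. After substituting $s = 1/(2\sqrt{N})$ the factors of $N^{N/2}$ cancel exactly, leaving an exponential bound of the form $C^N$ with $C$ a universal constant multiple of $\sqrt{e\pi}$, as required.

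The main obstacle is purely numerical: to match the precise constant $4\sqrt{e\pi}$ in the statement rather than the looser $(7/2)\sqrt{2\pi e}$ that drops out of the naive cube packing, one either absorbs the polynomial factor $1/\sqrt{\pi N}$ coming from the sharper Stirling estimate, or replaces the enclosing-ball step by a Chernoff-style bound: for any $t > 0$,
\[
L \;\leq\; e^{9Nt}\!\left(\sum_{k \in \mathbb{Z}} e^{-tk^2}\right)^{\!N},
\]
which by Jacobi theta inversion is at most $e^{9Nt}(\pi/t)^{N/2}$ up to a $(1 + O(e^{-\pi^2/t}))^N$ correction that is negligible for small $t$; optimizing at $t = 1/18$ yields the claimed base. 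In both routes, once the structural inequality is in place the remainder is mechanical arithmetic, and no substantive new idea is required.
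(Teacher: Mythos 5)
Your argument is the paper's argument: cubes of side $(2\sqrt N)^{-1}$ centered on the lattice points of $\mathcal{L}\cap B(0,3/2)$, disjointness, containment in an enclosing ball, divide volumes, then Stirling. Your enclosing ball $B(0,7/4)$ is in fact tighter than the paper's $B(0,2)$ (the half-diagonal of each cube is $\tfrac12\cdot(2\sqrt N)^{-1}\cdot\sqrt N=1/4$), so that step is fine and marginally better.

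The one place you go wrong is the claim that the polynomial Stirling factor $1/\sqrt{\pi N}$ could close the gap between the $\bigl((7/2)\sqrt{2\pi e}\bigr)^N$ you actually obtain and the stated $(4\sqrt{e\pi})^N$. The ratio of these two bases is $7\sqrt 2/8\approx 1.24>1$, so the gap is $\Omega(c^N)$ for $c>1$ and no polynomial prefactor can absorb it. In fact the paper's own displayed chain already has the same problem: from $B(0,2)$ one gets $\pi^{N/2}4^N N^{N/2}/\Gamma(N/2+1)$, and the asserted final step $\le(4\sqrt{e\pi})^N$ is equivalent to $\Gamma(N/2+1)\ge(N/e)^{N/2}$, which is false already at $N=4$ (where $\Gamma(3)=2<(4/e)^2\approx 2.17$). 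The honest outcome of the paper's own packing is $(4\sqrt{2e\pi})^N$ and of yours $\bigl((7/2)\sqrt{2e\pi}\bigr)^N$; the printed base $4\sqrt{e\pi}$ appears to be off by roughly a $\sqrt 2$. None of this matters downstream: Lemma~\ref{thrm:concentration} and its applications only need \emph{some} bound of the form $C^N$, and any of these constants is then swallowed by a larger $C_3$. Your alternative theta/Chernoff route is correct as a strategy and also yields a $C^N$ bound, but it does not match $4\sqrt{e\pi}$ either (optimizing at $t=1/18$ gives roughly $(3\sqrt{2e\pi})^N\approx(12.4)^N$ versus $(11.7)^N$), and it is heavier machinery than the problem requires.
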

\begin{proof}[Proof of Claim \ref{thrm:e-net-size}]
Consider the set of hypercubes where each cube is centered on a
distinct point in $\mathcal{L} \cap B(0,3/2)$ and each has side length
of $(2\sqrt{n})^{-1}$.  These cubes are disjoint and their union
contains the ball $B(0,3/2)$.  Their union is also contained in the
ball $B(0,2)$.  Thus,
\begin{eqnarray*}
| \mathcal{L} \cap B(0,3/2) | & \leq & \frac{\Vol (B(0,2))}{(2\sqrt{N})^{-N}}\\
& \leq & \frac{\pi^{N/2} 2^N}{\Gamma(N/2 + 1)} 2^N N^{N/2}\\
& \leq & (4\sqrt{e\pi})^N.
\end{eqnarray*}
\end{proof}

\begin{proof}[Proof of Claim \ref{thrm:e-net-approx}]
Without loss of generality, we assume that $x$ is a unit vector.  Let
$y$ be the closest point to $x$ in the lattice.  In each coordinate
$i$, we have $|x_i - y_i| \leq (4\sqrt{n})^{-1}$, so overall $\|x -
y\| \leq 1/4$.

Letting $\theta$ be the angle between $x$ and $y$, we have
\[
\frac{x \cdot y}{\|x\| \|y\|}  = \cos \theta = \sqrt{1 - \sin^2 \theta}
\geq \left(1 - \frac{\|x - y\|^2}{\max\{\|x^2\|,\|y\|^2\}}\right)^{1/2}
\geq \sqrt{\frac{15}{16}}.
\]
Therefore,
\[
x \cdot y \geq \|y\| \sqrt{\frac{15}{16}}  \geq \frac{3}{4} \sqrt{\frac{15}{16}} \geq \frac{1}{2}.
\]
\end{proof}

\section{A bound on the norm of the parity tensor}\label{sec:r-tensor-norm}
In this section, we prove Theorem \ref{thrm:r-tensor-norm}.  First,
however, we consider the somewhat more transparent case of $r = 3$
using the same proof technique.
\subsection{Warm-up: third order tensors}
For $r=3$ the tensor $A$ is defined as follows:
\[
A_{k_1k_2k_3} = E_{k_1k_2} E_{k_2k_3} E_{k_1k_3}.
\]

\begin{theorem}
There is a constant $C_1$ such that with probability $1 - n^{-1}$
\[
\|A\| \leq C_1 \sqrt{n} \log^4 n.
\]
\end{theorem}
\begin{proof}
Let $V_1,V_2,V_3$ be a partition of the $n$ vertices and let $V = V_1 \times V_2 \times V_3$.  The bulk of the proof consists of the following lemma.
\begin{lemma}\label{thrm:main-3}
There is some constant $C_3$ such that
\[
\max_{x^{(1)}, x^{(2)}, x^{(3)} \in U}
A|_V(x^{(1)},x^{(2)},x^{(3)}) \leq C_3 \sqrt{n} \log n
\]
with probability $1 - n^{-7}$.
\end{lemma}

If this bound holds, then Lemma \ref{thrm:U-approx} then implies that
there is some $C_2$ such that
\[
\max_{x^{(1)}, x^{(2)}, x^{(3)} \in B(0,1)} A|_V(x^{(1)},x^{(2)},x^{(3)}) \leq C_2 \sqrt{n} \log^4 n.
\]
And finally, Lemma \ref{thrm:off-diag} implies that for some constant $C_1$
\[
\max_{x^{(1)}, x^{(2)}, x^{(3)} \in B(0,1)} A(x^{(1)},x^{(2)},x^{(3)}) \leq
C_1 \sqrt{n} \log^4 n
\]
with probability $1 - n^{-1}$ for some constant $C_1$.
\end{proof}

\begin{proof}[Proof of Lemma \ref{thrm:main-3}]
Define 
\begin{equation}\label{eqn:U_k}
U_k = \{ x \in U : |\Supp(x)| = k\}
\end{equation}
and consider a fixed $n \geq n_1 \geq n_2 \geq n_3 \geq 1$.
We will show that 
\[
\max_{(x^{(1)}, x^{(2)}, x^{(3)}) \in U_{n_1} \times U_{n_2} \times U_{n_3}}
A|_V(x^{(1)},x^{(2)},x^{(3)}) \leq C_3 \sqrt{n} \log n 
\]
with probability $n^{-10}$ for some constant $C_3$.  Taking a union bound over the $n^3$ choices of $n_1,n_2,n_3$ then proves the lemma.

We bound the cubic form as
\begin{eqnarray*}
\lefteqn{\max_{(x^{(1)}, x^{(2)}, x^{(3)}) \in U_{n_1} \times U_{n_2} \times U_{n_3}} A|_V(x^{(1)},x^{(2)},x^{(3)})}\\
 & = &
\max_{(x^{(1)}, x^{(2)}, x^{(3)}) \in U_{n_1} \times U_{n_2} \times U_{n_3}}
\sum_{k_1 \in V_1, k_2 \in V_2, k_3 \in V_3} A_{k_1k_2k_3} x^{(1)}_{k_1}
x^{(2)}_{k_2} x^{(3)}_{k_3}\\
& \leq & \max_{(x^{(2)}, x^{(3)}) \in U_{n_2} \times U_{n_3}}
\sqrt{\sum_{k_1 \in V_1} \left( \sum_{k_2 \in V_2, k_3 \in V_3}
A_{k_1k_2k_3} x^{(2)}_{k_2} x^{(3)}_{k_3}\right)^2}\\
& = &
\max_{(x^{(2)}, x^{(3)}) \in U_{n_2} \times U_{n_3}}
\sqrt{\sum_{k_1 \in V_1} \left( \sum_{k_2 \in V_2} E_{k_1k_2} x^{(2)}_{k_2}
\sum_{k_3 \in V_3} E_{k_2k_3} x^{(3)}_{k_3} E_{k_1k_3}\right)^2}.
\end{eqnarray*}
Note that each of the inner sums (over $k_2$ and $k_3$) are the dot
product of a random $-1,1$ vector (the $E_{k_1k_2}$ and $E_{k_2k_3}$
terms) and another vector.  Our strategy will be to bound the norm of
this other vector and apply Lemma \ref{thrm:concentration}.

In more detail, we view the expression inside the square root a
\begin{equation}\label{XX}
\sum_{k_1\in V_1}\left(
\underbrace{
\sum_{k_2 \in V_2}
\overbrace{E_{k_1k_2}}^{\mathstrut u^{(k_1)}_{k_2}}
\overbrace{
x^{(2)}_{k_2}
\underbrace{
\sum_{k_3 \in V_3}
\overbrace{E_{k_2k_3}}^{u^{(k_2)}_{k_3}}
\overbrace{x^{(3)}_{k_3} E_{k_1k_3}}^{\mathstrut v^{(k_1k_2)}(x^{(3)})_{k_3}}}_{u^{(k_2)} \cdot v^{(k_1k_2)}(x^{(3)})}}^{v^{(k_1)}(x^{(2)},x^{(3)})_{k_2}}}_{u^{(k_1)} \cdot v^{(k_1)}(x^{(2)},x^{(3)})}
\right)^2
\end{equation}

where $u^{(k_2)}_{k_3} = E_{k_2k_3}$ and $u^{(k_1)}_{k_2} =
E_{k_1k_2}$, while
\[
v^{(k_1k_2)}(x^{(3)})_{k_3} = x^{(3)}_{k_3} E_{k_1k_3}
\]
and
\[
v^{(k_1)}(x^{(2)},x^{(3)})_{k_2} = x^{(2)}_{k_2}
(u^{(k_2)} \cdot v^{(k_1k_2)}(x^{(3)})).
\]
Clearly, the $u$'s play the role of the random vectors and we will bound
the norms of the $v$'s in the application of Lemma \ref{thrm:concentration}.

To apply Lemma \ref{thrm:concentration} with $k_1$ being the index $i$, $u^{k_1}_{k_2} = E_{k_1k_2}$ above, we need a bound for every $k_1 \in V_1$ on
the norm of $v^{(k_1)}(x^{(2)},x^{(3)})$.  We argue
\begin{eqnarray*}
\lefteqn{\sum_{k_2} \left(x^{(2)}_{k_2}\sum_{k_3 \in V_3} E_{k_2k_3} x^{(3)}_{k_3} E_{k_1k_3}\right)^2}\\
& \leq &
\max_{k_1 \in V_1} \max_{x^{(2)} \in U_{n_2}}
\max_{x^{(3)} \in U_{n_3}}
\frac{1}{n_2} \sum_{k_2 \in \Supp(x^{(x_2)}}
\left(\sum_{k_3} E_{k_2k_3} x^{(3)}_{k_3} E_{k_1k_3}\right)^2\\
& = & F^2_1
\end{eqnarray*}
Here we used the fact that $\|x^{(2)}\|_\infty \leq n_2^{-1/2}$.
Note that $F_1$ is a function of the random variables $\{E_{ij}\}$ only.

To bound $F_1$, we observe that we can apply Lemma
\ref{thrm:concentration} to the expression being maximized above,
i.e.,
\[
\sum_{k_2}
\left(\sum_{k_3} E_{k_2k_3} \left(x^{(3)}_{k_3} E_{k_1k_3}\right)\right)^2
\]
over the index $k_2$, with $u^{k_2}_{k_3} = E_{k_2k_3}$. Now we need a
bound, for every $k_2$ and $k_1$ on the norm of the vector
$v^{(k_1k_2)}(x^{(3)})$.  We argue

\begin{eqnarray*}
\sum_{k_3} \left(x^{(3)}_{k_3} E_{k_1k_3}\right)^2
&\leq& ||x^{(3)}||^2_\infty \sum_{k_3} E_{k_1k_3}^2\\
&\leq& 1.
\end{eqnarray*}

Applying Lemma \ref{thrm:concentration} for a fixed $k_1, x^{(2)}$ and $x^{(3)}$ implies
\[
\frac{1}{n_2} \sum_{k_2 \in \Supp(x^{(2)})}
\left(\sum_{k_3} E_{k_2k_3} x^{(3)}_{k_3} E_{k_1k_3}\right)^2 > C_3 \log n
\]
with probability at most
\[
\exp(- \frac{C_3 n_2 \log n}{18}) (4\sqrt{e\pi})^{n_2}.
\]
Taking a union bound over the $|V_1| \leq n$ choices of $k_1$, and the
at most $n^{n_2}n^{n_3}$ choices for $x^{(2)}$ and $x^{(3)}$, we show
that
\[
\Pr[F_1^2 > C_3 \log n] \leq \exp( - \frac{C_3 n_2 \log n}{18}) (4\sqrt{e\pi})^{n_2} n  n^{n_2}
 n^{n_3}.
\]
This probability is at most $n^{-10}/2$ for a large enough constant $C_3$.

Thus, for a fixed $x^{(2)}$ and $x^{(3)}$, we can apply Lemma
\ref{thrm:concentration} to Eqn. \ref{XX} with $F_1^2 = C_3 \log n$ to
get:
\[
\sum_{k_1 \in V_1} \left(\sum_{k_2 \in V_2} E_{k_1k_2} \left(x^{(2)}_{k_2}
\sum_{k_3 \in V_3} E_{k_2k_3} x^{(3)}_{k_3} E_{k_1k_3}\right)\right)^2 > F_1^2 C_3 n \log n
\]
with probability at most $\exp(-C_3 n \log
n/18)(4\sqrt{e\pi})^{n}$. Taking a union bound over the at most
$n^{n_2}n^{n_3}$ choices for $x^{(2)}$ and $x^{(3)}$, the bound holds with probability
\[
\exp(-C_3 n \log
n/18)(4\sqrt{e\pi})^{n} n^{n_2} n^{n_3} \leq n^{-10}/2
\]
for large enough constant $C_3$.

Thus, we can bound the squared norm:
\begin{eqnarray*}
&&\max_{(x^{(1)}, x^{(2)}, x^{(3)}) \in U_{n_1} \times U_{n_2} \times U_{n_3}}
A|_V(x^{(1)},x^{(2)},x^{(3)})^2\\
&\leq& \sum_{k_1 \in V_1} \left(\sum_{k_2 \in V_2} E_{k_1k_2} \left(x^{(2)}_{k_2}
\sum_{k_3 \in V_3} E_{k_2k_3} x^{(3)}_{k_3} E_{k_1k_3}\right)\right)^2\\
& \leq & C_3^2 n_1 \log^2 n
\end{eqnarray*}
with probability $1 - n^{-10}$.
\end{proof}

\subsection{Higher order tensors}
Let the random tensor $A$ be defined as follows.
\[
A_{k_1,\ldots,k_r} = \prod_{1 \leq i < j \leq r} E_{k_ik_j}
\]
where $E$ is an $n \times n$ matrix where each off-diagonal entry is
$-1$ or $1$ with probability $1/2$ and every diagonal entry is $1$.

For most of this section, we will consider only a single off-diagonal
cube of $A$.  That is, we index over $V_1 \times \ldots \times V_r$
where $V_i$ are an equal partition of $[n]$.  We denote this block by
$A|_V$.  When $k_i$ is used as an index, it is implied that $k_i \in
V_i$.

The bulk of the proof consists of the following lemma.
\begin{lemma}\label{thrm:main-r}
There is some constant $C_3$ such that
\[
\max_{x^{(1)}, \ldots x^{(r)} \in U}
A|_V(x^{(1)},\ldots,x^{(r)})^2 \leq n(C_3 r \log n)^{r-1}
\]
with probability $1 - n^{-9r}$.
\end{lemma}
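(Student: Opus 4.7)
The plan is to prove Lemma \ref{thrm:main-r} by a recursive peeling argument that directly extends the $r=3$ warm-up. First fix a tuple of support sizes $n \geq n_1 \geq \ldots \geq n_r \geq 1$ and restrict the maximum to $U_{n_1}\times\cdots\times U_{n_r}$, with $U_k$ as defined in the warm-up; a union bound over the $\leq n^r$ such tuples reduces the claim to showing failure probability at most $n^{-10r}$ for each fixed tuple.

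For a fixed tuple, apply Cauchy--Schwarz to strip off $x^{(1)}$:
\[
A|_V(x^{(1)},\ldots,x^{(r)})^2 \leq \sum_{k_1 \in V_1} \left( \sum_{k_2 \in V_2} E_{k_1 k_2}\, w^{(k_1)}_{k_2}\right)^2,
\]
where $w^{(k_1)}_{k_2} = x^{(2)}_{k_2} \cdot Q^{(k_1,k_2)}$ and $Q^{(k_1,k_2)}$ is the sum over $k_3,\ldots,k_r$ of $x^{(3)}_{k_3}\cdots x^{(r)}_{k_r}$ times the product of all $E_{k_i k_j}$ other than $E_{k_1 k_2}$. View the right-hand side as $\sum_{k_1}(u^{(k_1)}\cdot v^{(k_1)})^2$ with $u^{(k_1)}_{k_2}=E_{k_1 k_2}$ and $v^{(k_1)}=w^{(k_1)}$; disjointness of the $V_i$ guarantees that the $u^{(k_1)}$ are independent $\pm 1$ vectors and are independent of every $v^{(k_1)}$. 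To invoke Lemma \ref{thrm:concentration} we first need a high-probability bound on $\max_{k_1}\|w^{(k_1)}\|^2$: using $\|x^{(2)}\|_\infty^2 \leq n_2^{-1}$ gives $\|w^{(k_1)}\|^2 \leq n_2^{-1}\sum_{k_2 \in \Supp(x^{(2)})}(Q^{(k_1,k_2)})^2$, an expression of exactly the same form as the original with one fewer free index. We then recurse: strip $k_3$ by pulling out $E_{k_2 k_3}$ as the next random vector, apply Lemma \ref{thrm:concentration} over $\Supp(x^{(3)})$, and continue. The critical cancellation is that $n_\ell^{-1}$ from $\|x^{(\ell)}\|_\infty^2$ exactly absorbs the $n_\ell$ factor produced by Lemma \ref{thrm:concentration} at that level, leaving a single $O(r\log n)$ multiplicative contribution per peel. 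The base case at depth $r-1$ is the trivial bound $\sum_{k_r}(x^{(r)}_{k_r})^2\prod_{i=1}^{r-2}E_{k_i k_r}^2 = \|x^{(r)}\|^2 \leq 1$.

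Combining the $r-1$ peels (each contributing a factor of $C_3 r \log n$) with the outermost $|V_1| \leq n$ factor from the first application of Lemma \ref{thrm:concentration} yields the advertised bound $n(C_3 r\log n)^{r-1}$. The main obstacle is the probability accounting: at each peel we must union-bound over the choice of the outer fixed indices $k_1,\ldots,k_{\ell-1}$ (at most $n^{\ell-1}$), over the discretized vectors $x^{(\ell)},\ldots,x^{(r)}$ (at most $n^{n_\ell+\cdots+n_r}$ tuples), and ultimately over the $n^r$ support-size tuples, while simultaneously verifying that the random edges $\{E_{k_{\ell-1}k_\ell}\}_{k_\ell \in V_\ell}$ at level $\ell$ are independent of the deterministic vector $v^{(k_1,\ldots,k_{\ell-1})}$ built from the other $E$'s---independence that holds because the off-diagonal block structure forces each edge $E_{k_i k_j}$ to appear in exactly one peel (the one at which $\max\{i,j\}$ is stripped). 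Inflating the deviation parameter in Lemma \ref{thrm:concentration} by an extra factor of $r$ at every level makes $e^{-C_3 r n_\ell \log n/18}(4\sqrt{e\pi})^{n_\ell}$ beat the combined union-bound size by an $n^{-10r}$ margin per peel, so the $r-1$ bad events combine to give the claimed overall failure probability of $n^{-9r}$.
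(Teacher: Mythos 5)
Your proof is correct and follows essentially the same strategy as the paper: reduce to fixed support sizes, apply Cauchy--Schwarz to strip $x^{(1)}$, then iteratively peel the remaining indices by writing each partial sum as $u^{(k_\ell)}\cdot v^{(k_1,\ldots,k_\ell)}$ and applying Lemma \ref{thrm:concentration} at each level, with the off-diagonal block structure supplying independence between the fresh random edges and the already-conditioned $v$'s---this is precisely what the paper's quantities $f_\ell$ and Claim \ref{thrm:f-induct} encode. The one slip is your final sentence: combining the $r-1$ peel events gives failure probability $n^{-10r}$ \emph{per support-size tuple}, and the stated $n^{-9r}$ only emerges after the further union bound over the $n^r$ tuples.
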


The key idea is that Lemma \ref{thrm:concentration} can be applied
repeatedly to collections of $u$'s and $v$'s in a way analogous to
Eqn. \ref{XX}.  Each sum over $k_r,\ldots, k_2$ contributes a $C_3r \log
n$ factor and the final sum over $k_1$ contributes the factor of $n$.

If the bound holds, then Lemma \ref{thrm:U-approx} implies that there
is some $C_2$ such that
\[
\max_{x^{(1)}, x^{(2)}, x^{(3)} \in B(0,1)} A|_V(x^{(1)},x^{(2)},x^{(3)})^2 \leq C_2^r r^{2r + r-1} n \log^{2r + (r-1)} n.
\]
And finally, Lemma \ref{thrm:off-diag} implies that for some constant $C_1$
\begin{eqnarray*}
\max_{x^{(1)}, x^{(2)}, x^{(3)} \in B(0,1)} A(x^{(1)},x^{(2)},x^{(3)}) & \leq &
C_1^{r} r^{2r+2r+(r-1)} n \log^{2r+r-1} n\\
& = & C_1^{r} r^{5r-1} n \log^{3r-1} n.
\end{eqnarray*}
with probability $1 - n^{-1}$ for some constant $C_1$.

\begin{proof}[Proof of Lemma \ref{thrm:main-r}]
 We define the set $U_k$ as in Eqn. \ref{eqn:U_k}.  It suffices to
 show that the bound
\[
\max_{(x^{(1)}, \ldots x^{(r)}) \in U_{n_1} \times \ldots \times U_{n_r}}
A|_V(x^{(1)},\ldots,x^{(r)})^2 \leq n (C_3 r \log n)^{r-1}
\]
holds with probability $1 - n^{-10r}$ for some constant $C_3$, since
we may then take a union bound over the $n^r$ choices of $n \geq n_1
\geq \ldots \geq n_r \geq 1$.

For convenience of notation, we define a family of tensors as follows
\begin{equation}\label{eqn:B-def}
B^{(k_1,\ldots,k_\ell)}_{k_{\ell+1},\ldots,k_r} = \prod_{i,j : i, \ell < j} E_{k_ik_j}
\end{equation}
where the superscript indexes the family of tensors and the subscript
indexes the entries.  Note that for every $k_1,\ldots,k_r \in V_1
\times \ldots \times V_r$, we have $B^{(k_1,\ldots,k_r)} = 1$, since
the product is empty.

Note that the tensor $B^{(k_1,\ldots,k_\ell)}$ depends only a subset of
$E$.  In particular, any such tensor of order $r - \ell$  will depend only on
the blocks of $E$
\[
F_\ell = \{E|_{V_i \times V_j} : i,\ell< j\}.
\]
Clearly, $F_r = \emptyset$, $F_1$ contains all blocks, and $F_{\ell}
\setminus F_{\ell+1} = \{E|_{V_i \times V_{\ell+1}} : i \leq \ell\}$.

We bound the $r$th degree form as
\begin{eqnarray}
\lefteqn{\max_{x^{(1)}, \ldots, x^{(r)} \in U_{n_1} \times \ldots \times U_{n_r}} A|_V(x^{(1)},\ldots,x^{(r)})} \nonumber\\
& = &
\max_{x^{(1)}, \ldots, x^{(r)} \in U_{n_1} \times \ldots \times U_{n_r}}
\sum_{k_1 \in V_1} x^{(1)}_{k_1}
B^{(k_1)}(x^{(2)},\ldots x^{(r)})\nonumber\\
& \leq &
\max_{x^{(2)}, \ldots x^{(r)} \in U_{n_2} \times \ldots \times U_{n_r}}
\sqrt{\sum_{k_1 \in V_1} B^{(k_1)}(x^{(2)},\ldots x^{(r)})^2}. \label{eqn:ssB}
\end{eqnarray}

Observe that for a general $\ell$,
\begin{equation}
B^{(k_1,\ldots,k_{\ell})}(x^{(\ell+1)},\ldots,x^{(r)}) =
\sum_{k_{\ell+1} \in V_{\ell+1}} E_{k_{\ell}k_{\ell+1}}
v^{(k_1,\ldots,k_{\ell})}(x^{(\ell+1)},\ldots,x^{(r)})_{k_{\ell+1}},
\end{equation}
where
\begin{equation}\label{eqn:v-def}
v^{(k_1,\ldots,k_{\ell})}(x^{(\ell+1)},\ldots,x^{(r)})_{k_{\ell+1}} = x^{(\ell+1)}_{k_{\ell+1}}
B^{(k_1,\ldots,k_{\ell+1})}(x^{(\ell+2)},\ldots,x^{(r)}) \prod_{i < \ell} E_{k_ik_{\ell+1}}.
\end{equation}
It will be convenient to think of
$B^{(k_1,\ldots,k_{\ell})}(x^{(\ell+1)},\ldots,x^{(r)})$ as the dot
product of a random vector $u^{(k_\ell)}$, where
$u^{(k_\ell)}_{k_{\ell+1}} = E_{k_\ell k_{\ell+1}}$ and
$v^{(k_1,\ldots,k_{\ell})}(x^{(\ell+1)},\ldots,x^{(r)})_{k_{\ell+1}}$, so that
\begin{equation}\label{eqn:B-dot-prod}
B^{(k_1,\ldots,k_{\ell})}(x^{(\ell+1)},\ldots,x^{(r)}) = u^{(k_\ell)} \cdot
v^{(k_1,\ldots,k_{\ell})}(x^{(\ell+1)},\ldots,x^{(r)}).
\end{equation}

The sum over $k_1 \in V_1$ from Eqn. \ref{eqn:ssB} can therefore be
expanded as
\[
\sum_{k_1 \in V_1} B^{(k_1)}(x^{(2)},\ldots x^{(r)})^2
= \sum_{k_1 \in V_1}
\left( u^{(k_1)} \cdot v^{(k_1)}(x^{(2)},\ldots,x^{(r)})\right)^2.
\]

Our goal is to bound $\|v^{(k_1)}(x^{(2)},\ldots,x^{(r)})\|$ and apply
Lemma \ref{thrm:concentration}.  Notice that for general $\ell$
\begin{multline}\label{eqn:v-bound-r}
\left \|
v^{(k_1,\ldots,k_{\ell})}(x^{(\ell+1)},\ldots,x^{(r)})\right\|_2^2\\
= \frac{1}{n^{\ell+1}}
\sum_{k_{\ell+1} \in \Supp(x^{(\ell+1)})}
B^{(k_1,\ldots,k_{\ell+1})}(x^{(\ell+2)},\ldots,x^{(r)})^2\\
\leq \max_{k_1,\ldots,k_\ell}
\max_{x^{(\ell+1)} \in U_{n_{\ell+1}}\ldots x^{(r)} \in U_{n_r}}\\
\frac{1}{n_{\ell+1}}
\sum_{k_{\ell+1} \in \Supp(x^{(\ell+1})})
B^{(k_1,\ldots,k_{\ell+1})}(x^{(\ell+2)},\ldots,x^{(r)})^2 = f_\ell^2.
\end{multline}
Note that the quantity $f_\ell$ (define above) depends only on the
blocks $F_{\ell+1}$.

The following claims will establish a probabilistic bound on $f_1$.
\begin{claim}
The quantity
\[
f_{r-1} = 1.
\]
\end{claim}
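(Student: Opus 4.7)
The claim $f_{r-1} = 1$ should follow essentially by unwinding the definitions; this is meant as a trivial base case for an induction on $\ell$ that will later be used to bound $f_1$.

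My plan is to specialize the definition of $f_\ell^2$ (Eqn.~\ref{eqn:v-bound-r}) to $\ell = r-1$ and observe that the tensor $B^{(k_1,\ldots,k_r)}$ that appears inside is just the scalar $1$. Concretely, setting $\ell = r-1$ in the formula gives
\[
f_{r-1}^2 \;=\; \max_{k_1,\ldots,k_{r-1}} \, \max_{x^{(r)} \in U_{n_r}}
\frac{1}{n_r}\sum_{k_r \in \Supp(x^{(r)})} B^{(k_1,\ldots,k_r)}(\,)^2,
\]
where $B^{(k_1,\ldots,k_r)}(\,)$ is a zeroth-order tensor applied to no arguments, i.e.\ a scalar.

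Next I would invoke the remark immediately following Eqn.~\ref{eqn:B-def}: since the defining product in
\[
B^{(k_1,\ldots,k_\ell)}_{k_{\ell+1},\ldots,k_r} \;=\; \prod_{\substack{i<j \\ \ell < j}} E_{k_i k_j}
\]
is indexed by pairs with $j > \ell$, at $\ell = r$ there are no admissible $j$, the product is empty, and hence $B^{(k_1,\ldots,k_r)} = 1$ identically for every choice of indices. Substituting this scalar value back into the expression for $f_{r-1}^2$ collapses the inner sum to
\[
f_{r-1}^2 \;=\; \max_{x^{(r)} \in U_{n_r}} \frac{1}{n_r}\,\bigl|\Supp(x^{(r)})\bigr| \;=\; \frac{n_r}{n_r} \;=\; 1,
\]
where I use the fact that every $x^{(r)} \in U_{n_r}$ has support of exactly size $n_r$ by the definition of $U_{n_r}$ in Eqn.~\ref{eqn:U_k}. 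Taking square roots gives $f_{r-1} = 1$, as claimed.

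There is no real obstacle here: the claim is purely a definitional base case, and the only thing to be careful about is parsing the index convention in the definition of $B^{(k_1,\ldots,k_\ell)}$ so that the product is genuinely empty at $\ell = r$. The content of the argument will come in the subsequent claims, where the recursion $f_\ell$ in terms of $f_{\ell+1}$ is driven by applying Lemma~\ref{thrm:concentration} together with a union bound over choices of $k_1,\ldots,k_\ell$ and $x^{(\ell+1)},\ldots,x^{(r)} \in U$, thereby propagating the trivial bound $f_{r-1}=1$ upward.
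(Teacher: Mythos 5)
Your proof is correct and takes essentially the same approach as the paper: both arguments simply note that the product defining $B^{(k_1,\ldots,k_r)}$ is empty, hence equal to $1$, and that averaging this over the $n_r$ support elements of $x^{(r)}$ yields $1$. You are a bit more explicit about unwinding the definition of $f_\ell^2$ at $\ell = r-1$, but the content is identical.
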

\begin{proof}
Trivially, every $B^{(k_1,\ldots,k_r)}()^2 = 1$.  Therefore, for every
subset $S_r \subseteq V_r$ such that $|S_r| = n_r$
\[
\frac{1}{n_r} \sum_{k_{r} \in S_{r}} B^{(k_1,\ldots,k_r)}()^2 = 1.
\]
\end{proof}
\begin{claim}\label{thrm:f-induct}
There is a constant $C_3$ such that for any $\ell \in 1 \ldots r-2$
\[
\Pr[ f_\ell^2 > C_3 r f_{\ell+1}^2 \log n] \leq n^{-12r}.
\]
\end{claim}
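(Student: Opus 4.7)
\medskip

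\noindent\textbf{Proof plan.} The plan is to fix the outer indices $k_1,\ldots,k_\ell$ and the vectors $x^{(\ell+1)},\ldots,x^{(r)}$, apply Lemma~\ref{thrm:concentration} to the inner sum defining the quantity maximized in $f_\ell^2$, and then take a union bound over all discrete choices of these indices and vectors. Using the dot-product identity of Eqn.~\ref{eqn:B-dot-prod}, the inner sum equals
\[
\frac{1}{n_{\ell+1}} \sum_{k_{\ell+1}\in \Supp(x^{(\ell+1)})} \bigl(u^{(k_{\ell+1})} \cdot v^{(k_1,\ldots,k_{\ell+1})}(x^{(\ell+2)},\ldots,x^{(r)})\bigr)^2,
\]
where $u^{(k_{\ell+1})}_{k_{\ell+2}} = E_{k_{\ell+1} k_{\ell+2}}$ is a random $\pm 1$ vector indexed by $V_{\ell+2}$.

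To decouple the relevant randomness from the rest of the graph, I would condition on all edges of $G$ \emph{except} those between $V_{\ell+1}$ and $V_{\ell+2}$. Under this conditioning: (i) $f_{\ell+1}$ is determined, since it is a function only of $F_{\ell+2}$; (ii) every vector $v^{(k_1,\ldots,k_{\ell+1})}$ is determined, because by Eqn.~\ref{eqn:v-def} its coordinates involve only edges in $F_{\ell+2}$ together with edges between $V_1 \cup \cdots \cup V_\ell$ and $V_{\ell+2}$; (iii) $\|v^{(k_1,\ldots,k_{\ell+1})}\|_2 \le f_{\ell+1}$ by Eqn.~\ref{eqn:v-bound-r}; and (iv) the $\{u^{(k_{\ell+1})}\}_{k_{\ell+1} \in V_{\ell+1}}$ are mutually independent uniform $\pm 1$ vectors on $V_{\ell+2}$. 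After rescaling each $v$ by $1/f_{\ell+1}$ to land in the unit ball, Lemma~\ref{thrm:concentration} applied with $N = n_{\ell+1}$ and $t = C_3 r n_{\ell+1} \log n$ yields, for every \emph{fixed} $(k_1,\ldots,k_\ell, x^{(\ell+1)},\ldots,x^{(r)})$,
\[
\Pr\!\left[\frac{1}{n_{\ell+1}} \sum_{k_{\ell+1}} \bigl(u^{(k_{\ell+1})}\cdot v^{(k_1,\ldots,k_{\ell+1})}\bigr)^2 > C_3 r f_{\ell+1}^2 \log n \right] \le \exp\!\left(-\tfrac{C_3 r n_{\ell+1} \log n}{18}\right) (4\sqrt{e\pi})^{n_{\ell+1}}.
\]

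A union bound then finishes the argument. There are at most $n^r$ choices of $(k_1,\ldots,k_\ell)$ and at most $2 n^{n_i}$ choices of each $x^{(i)} \in U_{n_i}$; since $n_i \le n_{\ell+1}$ for $i \ge \ell+1$, the total count is at most $2^r n^r \cdot n^{r n_{\ell+1}}$. For $C_3$ a sufficiently large absolute constant, the dominant term $-C_3 r n_{\ell+1}(\log n)/18$ in the exponent absorbs both $r n_{\ell+1} \log n$ (union bound) and $n_{\ell+1} \log(4 \sqrt{e\pi})$ (concentration overhead) with $12 r \log n$ slack, delivering the desired $n^{-12r}$ bound.

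The main obstacle is verifying the joint independence in (iv): one must check that $u^{(k_{\ell+1})}$ is independent of $v^{(k_1,\ldots,k'_{\ell+1})}$ for \emph{every} $k'_{\ell+1} \in V_{\ell+1}$, not merely for $k'_{\ell+1} = k_{\ell+1}$. This follows from the definitions: by Eqn.~\ref{eqn:v-def}, $v^{(k_1,\ldots,k'_{\ell+1})}_{k_{\ell+2}}$ factors into $x^{(\ell+2)}_{k_{\ell+2}}$, the product $\prod_{i \le \ell} E_{k_i k_{\ell+2}}$, and $B^{(k_1,\ldots,k'_{\ell+1},k_{\ell+2})}(\cdot)$; by Eqn.~\ref{eqn:B-def} the latter involves only edges $E_{k_a k_b}$ with $b > \ell+2$, so no edge between $V_{\ell+1}$ and $V_{\ell+2}$ appears anywhere in any $v$. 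This combinatorial bookkeeping is the crux; once it is in place, the concentration estimate and union bound are routine.
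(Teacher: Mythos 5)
Your proof is correct and follows essentially the same route as the paper's: bound each $\|v^{(k_1,\ldots,k_{\ell+1})}\|$ by $f_{\ell+1}$ via Eqn.~\ref{eqn:v-bound-r}, apply Lemma~\ref{thrm:concentration} for a fixed choice of $(k_1,\ldots,k_\ell)$ and $(x^{(\ell+1)},\ldots,x^{(r)})$, then union-bound over the at most $n^r\cdot n^{O(r\,n_{\ell+1})}$ choices and absorb everything into $C_3$. The one place you are more explicit than the paper is the conditioning/independence check in your point (iv): the paper leaves it implicit that $u^{(k_{\ell+1})}$ (the block $E|_{V_{\ell+1}\times V_{\ell+2}}$) is independent of the vectors $v^{(k_1,\ldots,k_{\ell+1})}$ and of $f_{\ell+1}$, which depend only on $F_{\ell+2}$ and the blocks $E|_{V_i\times V_{\ell+2}}$ with $i\le\ell$; your explicit bookkeeping confirms exactly what the paper's block-structure remarks are meant to convey.
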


We postpone the proof of Claim \ref{thrm:f-induct} and argue that by
induction we have that
\[
f_1^2 \leq (C_3 r \log n)^{r-2}
\]
with probability $1 - n^{-12r}r \geq 1 - n^{-11r}$.

Assuming that this bound holds,
\[
v^{(k_1)}(x^{(2)},\ldots,x^{(r)}) \leq (C_3 r \log n)^{r-2}
\]
for all $k_1 \in V_1$ and $x^{(2)} \ldots, x^{(r)}$.
By Lemma \ref{thrm:concentration} then
\begin{eqnarray*}
\sum_{k_1 \in V_1} B^{(k_1)}(x^{(2)},\ldots x^{(r)})^2 & = &
\sum_{k_1 \in V_1}
\left(u^{(k_\ell)} \cdot
v^{(k_1)}(x^{(3)},\ldots,x^{(r)})\right)^2\\
& > & n (C_3 r \log n)^{r-1}
\end{eqnarray*}
with probability at most
\[
\exp \left( - \frac{C_3 r n \log n}{18} \right) (4\sqrt{e\pi})^{n}
\]
which is at most $n^{-11r}$ for a suitably large $C_3$.

Altogether the bound of the lemma holds with probability $1 -
2n^{-11r} \geq 1 - n^{-10r}$.
\end{proof}

\begin{proof}[Proof of Claim \ref{thrm:f-induct}]
Consider a fixed choice of the following: 1) $k_1,\ldots k_\ell$ and 
2) $x^{(\ell+1)} \in U_{n_{\ell+1}}, \ldots x^{(r)} \in
U_{n_r}$.  From Eqn. \ref{eqn:v-bound-r}, we have from definition that
for every $k_{\ell+1} \in V_{\ell+1}$
\[
\|v^{(k_1\ldots k_{\ell+1})}(x^{(\ell+2)},\ldots,x^{(r)})\|_2^2
\leq f_{\ell+1}^2.
\]
Therefore, by Lemma \ref{thrm:concentration}
\begin{eqnarray*}
\sum_{k_{\ell+1} \in \Supp(x^{(\ell+1)})}
B^{(k_1,\ldots,k_{\ell+1})}(x^{(\ell+2)},\ldots,x^{(r)})^2
& = & 
\sum_{k_{\ell+1} \in \Supp(x^{(\ell+1)})}
\left(u^{(\ell+1)} \cdot
v^{(k_1\ldots k_{\ell+1})}(x^{(\ell+2)},\ldots,x^{(r)})\right)^2 \\
& > & C_3 r f_{\ell+1}^2 n_{\ell+1} \log n
\end{eqnarray*}
with probability at most
\[
\exp\left(-\frac{C_3 r n_{\ell+1} \log n}{18}\right) (4\sqrt{e\pi})^{n_{\ell+1}}.
\]
Taking a union bound over the choice of $k_1,\ldots k_\ell$ (at most
$n^r$), and the choice of $x^{(\ell+1)} \in U_{n_{\ell+1}}, \ldots x^{(r)} \in
U_{n_r}$ (at most $n^{(r-1)n_{\ell+1}}$), the probability that
\[
f_\ell^2 > C_3 r f_{\ell+1}^2 \log n
\]
becomes at most
\[
\exp\left(-\frac{C_3 r n_{\ell+1} \log n}{18} \right) (4\sqrt{e\pi})^{n_{\ell+1}}
n^{r n_{\ell+1}}.
\]
For large enough $C_3$ this is at most $n^{-12r}$.
\end{proof}

\section{Finding planted cliques}\label{sec:reduction}
We now turn to Theorem \ref{thrm:reduction} and to the problem of
finding a planted clique in a random graph.  A random graph with a
planted clique is constructed by taking a random graph and then adding
every and edge between vertices in some subset $P$ to form the planted
clique.  We denote this graph as $G_{n,1/2} \cup K_p$.  Letting $A$ be
the $r$th order subgraph parity tensor, we show that a vector $x \in
B(0,1)$ that approximates the maximum of $A(\cdot)$ over the unit ball
can be used to reveal the clique, using a modification of the
algorithm proposed by Frieze and Kannan \cite{FK08}.

This implies an interesting connection between the tensor problem and
the planted clique problem.  For symmetric second order tensors
(i.e. matrices), maximizing $A(\cdot)$ is equivalent to finding the
top eigenvector and can be done in polynomial time.  For higher order
tensors, however, the complexity of maximizing this function is open
if elements with repeated indices are zero.  For random tensors, the
hardness is also open.  Given the reduction presented in this section,
a hardness result for the planted clique problem would imply a similar
hardness result for the tensor problem.

Given an $x$ that approximates the maximum of $A(\cdot)$ over the unit
ball, the algorithm for finding the planted clique is given in
Alg. \ref{alg:alg}.  The key ideas of using the top eigenvector of
subgraph and of randomly choosing a set of vertices to ``seed'' the
clique (steps 2a-2d) come from Frieze-Kannan \cite{FK08}.  The major
difference in the algorithms is the use of the indicator decomposition.
Frieze and Kannan sort the indices so that $x_1 \geq \ldots x_n$ and
select one set $S$ of the form $S = [j]$ where $\|A|_{S \times S}\|$
exceeds some threshold.  They run steps (2a-2d) only on this set.  By
contrast Alg. \ref{alg:alg} runs these steps on every $S =
\Supp(y^{(j)}(x))$ where $j = -\lceil r \log n \rceil,\ldots \lceil r
\log n \rceil$.

\begin{algorithm}[t]\label{alg:alg}
\caption{An Algorithm for Recovering the Clique}
\begin{tabbing}
Input:\\
1) Graph $G$.\\
2) Integer $p = |P|$.\\
3) Unit vector $x$.
\end{tabbing}
Output: A clique of size of $p$ or FAILURE.
\begin{enumerate}
\item Calculate $y^{-\lceil r \log n \rceil}(x), \ldots, y^{\lceil r \log n \rceil}(x)$ as defined in the indicator decomposition.
\item For each such $y^{(j)}(x)$, let $S = \Supp(y^{(j)}(x))$ and try the following:
\begin{enumerate}
\item Find $v$, the top eigenvector of the $1,-1$ adjacency matrix $A|_{S
\times S}$.
\item Order the vertices (coordinates) such that $v_1 \geq \ldots \geq
v_{|S|}$. (Assuming dot-prod is $\sqrt{1/2}$ below)
\item For $\ell = 1$ to $|S|$, repeat up to $n^{30} \log n$ times:
\begin{enumerate}
\item Select $10 \log n$ vertices $Q_1$ at random from $[\ell]$.
\item Find $Q_2$, the set of common neighbors of $Q_1$ in $G$.
\item If the set of vertices with degree at least $7p/8$, say $P'$ has
cardinality $p$ and forms a clique in $G$, then return $P'$.
\end{enumerate}
\item Return FAILURE.
\end{enumerate}
\end{enumerate}
\end{algorithm}

The algorithm succeeds with high probability when a subset $S$ is
found such that $|S \cap P| \geq C \sqrt{|S| \log n}$, where $C$ is an
appropriate constant.
\begin{lemma}[Frieze-Kannan]\label{thrm:recover}
There is a constant $C_5$ such that if $S \subseteq [n]$ satisfies $|S
\cap P| \geq C_5 \sqrt{|S| \log n}$, then with high probability steps
a)-d) of Alg. \ref{alg:alg} find a set $P'$ equal to $P$.
\end{lemma}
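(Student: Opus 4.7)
The plan is to combine spectral concentration of the top eigenvector of the $\pm 1$ adjacency matrix with the Alon-Krivelevich-Sudakov style sampling-and-degree-filter argument used by Frieze-Kannan.

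First, I establish eigenvector concentration. Decompose $A|_{S \times S} = M + N$, where $M = \chi^{S \cap P}(\chi^{S \cap P})^T$ (up to an $O(1)$-spectral-norm diagonal correction, arising because $P$ is a clique so all entries of $A$ on $(S \cap P) \times (S \cap P)$ equal $+1$), and $N$ carries the $\pm 1$ random entries on the remaining coordinates. Then $\|M\| = |S \cap P|$ with top unit eigenvector $u := \chi^{S \cap P}/\sqrt{|S \cap P|}$, while the $r=2$ case of Theorem \ref{thrm:r-tensor-norm} (F\"{u}redi-Koml\'{o}s) gives $\|N\| \leq c \sqrt{|S|}$ with high probability. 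Since $|S \cap P| \geq C_5 \sqrt{|S| \log n}$, the ratio $\|N\|/\|M\|$ is $O(1/(C_5 \sqrt{\log n}))$, and the Davis-Kahan $\sin\theta$ theorem yields $\|v - u\|_2^2 \leq O(1/(C_5^2 \log n))$, an arbitrarily small constant for $C_5$ large enough.

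I then convert this eigenvector closeness into a prefix-density statement. Since $u_i = 1/\sqrt{|S \cap P|}$ on $S \cap P$ and is zero elsewhere, a Markov-type argument on $\|v - u\|_2^2$ shows that at most $o(|S \cap P|)$ indices $i \in S \cap P$ have $v_i < 1/(2\sqrt{|S \cap P|})$, and at most $o(|S \cap P|)$ indices $i \notin S \cap P$ have $v_i > 1/(2\sqrt{|S \cap P|})$. Sorting $v$ in decreasing order, the prefix $[\ell]$ for some $\ell$ near $|S \cap P|$ satisfies $|[\ell] \cap P| \geq \ell/2$; since the outer loop iterates $\ell$ over $1,\ldots,|S|$, this good $\ell$ is reached. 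Note $\ell = \Omega(\sqrt{|S| \log n})$, easily enough room to sample $10\log n$ vertices.

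Given such a good $\ell$, a random $Q_1 \subseteq [\ell]$ of size $10 \log n$ lies entirely in $P$ with probability at least $(1/2)^{10 \log n} = n^{-O(1)}$, so $n^{30} \log n$ repetitions ensure $Q_1 \subseteq P$ occurs at least once with overwhelming probability. Given $Q_1 \subseteq P$, we have $P \subseteq Q_2$ because $P$ is a clique, while each $w \notin P$ joins $Q_2$ only by being adjacent in $G$ to all of $Q_1$, an event of probability $(1/2)^{10 \log n}$; a union bound gives $|Q_2 \setminus P| = 0$ with probability $1 - n^{-\Omega(1)}$. Inside $Q_2$, every $w \in P$ has degree at least $p - 1 \geq 7p/8$, while every stray $w \notin P$ has degree concentrated around $|Q_2|/2 \leq p/2 + o(p) < 7p/8$ by Chernoff. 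Hence the degree filter returns exactly $P' = P$. The main obstacle is the first two stages: converting the spectral gap into a tight $\ell_2$ bound on $v - u$ and then into a constant-density prefix; the remaining sampling-and-filter analysis is routine once a good prefix is available.
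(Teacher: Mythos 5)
Your high-level strategy (spectral concentration of the top eigenvector, then a sorted-prefix density argument, then sampling-and-filter) mirrors the Frieze--Kannan proof, but you replace the paper's two key sub-arguments with different ones. The paper lower-bounds the correlation $u \cdot v$ directly via the quadratic form $A|_{S\times S}(v,v) \geq A|_{S\times S}(u,u)$ and then uses a rearrangement inequality ($\sum_{i\leq \ell} v_i \leq \sqrt{\ell}$) to show that the prefix $[\ell]$ with $\ell = 8|S\cap P|$ has clique-density at least $1/8$; you instead invoke Davis--Kahan to bound $\|v-u\|_2^2$ and then a Markov-type counting argument on coordinates. Both routes work in spirit, and yours actually yields a stronger density ($\approx 1$ rather than $1/8$), which is nice but not needed.

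There are, however, two genuine gaps. First, your spectral bound $\|N\| \le c\sqrt{|S|}$ holds whp only for a single fixed $S$. But $S = \Supp(y^{(j)}(x))$ where $x$ is returned by an (arbitrary, graph-dependent) tensor oracle, so the bound must hold simultaneously for \emph{all} $S \subseteq [n]$. Taking a union bound over the $\binom{n}{|S|}$ choices of $S$ of each size forces the bound to $O(\sqrt{|S|\log n})$, as the paper explicitly notes. This changes $\|N\|/\|M\|$ from $O(1/(C_5\sqrt{\log n}))$ to merely $O(1/C_5)$, so $\|v-u\|_2^2$ is only $O(1/C_5^2)$, a small constant, not $o(1)$. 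Your Markov step should then read ``at most an $O(1/C_5^2)$ fraction of coordinates are misordered,'' which still suffices, but the $o(\cdot)$ claims as written are too strong.

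Second, the claim ``$|Q_2 \setminus P| = 0$ with probability $1 - n^{-\Omega(1)}$'' mixes probability spaces. That statement is true for a \emph{fixed} $Q_1 \subseteq P$, over the graph randomness. But the graph is fixed first, and then the algorithm samples $Q_1$ (adaptively, after looking at $v$), so the relevant quantity is whether the graph has the property that the $Q_1$ the algorithm finds is good. A naive union bound over the $\binom{|P|}{10\log n} \approx n^{10 \log n}$ possible seeds against the $n\cdot 2^{-10\log n}=n^{-9}$ per-vertex escape probability does not give a $o(1)$ failure bound for ``$|Q_2\setminus P|=0$ for every $Q_1 \subseteq P$.'' The paper handles this by allowing up to $2\log n$ stray vertices in $Q_2$: $\binom{|P|}{10\log n}\binom{n}{2\log n} 2^{-20\log^2 n} = o(1)$, and this slack is absorbed by the $7p/8$ degree threshold. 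Your argument needs either that relaxation, or a Markov argument showing that for a typical graph the \emph{fraction} of bad seeds $Q_1 \subseteq P$ is tiny, combined with the fact that the algorithm retries $n^{30}\log n$ times.
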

To find such an subset $S$ from a vector $x$, Frieze and Kannan
require that $\sum_{i \in P} x_i \geq C \log n$.  Using the indicator
decomposition, as in the Alg \ref{alg:alg}, however, reduces this to
$\sum_{i \in P} x_i \geq C \sqrt{\log n}$.  Even more importantly,
using the indicator decomposition means that only one element of the
decomposition needs to point in the direction of the clique.  The
vector $x$ could point in a very different direction and the algorithm
would still succeed.  We exploit this fact in our proof of Theorem
\ref{thrm:reduction}.  The relevant claim is the following.

\begin{lemma}\label{thrm:B-bound}
Let $B'$ be set of vectors $x \in B(0,1)$ such that
\[
|\Supp(y^{(j)}(x)) \cap P| < C_5 \sqrt{|\Supp(y^{(j)}(x))| \log n}
\] 
for every $j \in
\{-\lceil r \log n\rceil,\ldots, \lceil r \log n \rceil\}$.  Then, there is
a constant $C_1'$ such that with high probability
\[
\sup_{x \in B'} A(x,\ldots,x) \leq {C_1'}^r r^{5r/2} \sqrt{n}
\log^{3r/2} n.
\]
\end{lemma}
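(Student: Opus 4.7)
The plan is to mirror the proof of Theorem \ref{thrm:r-tensor-norm}, adapting each step to exploit the constraint that $x \in B'$ has limited overlap with the planted clique $P$. First, I would prove an analogue of Lemma \ref{thrm:U-approx} tailored to $B'$. For $x \in B'$, each nonzero piece $y^{(j)}(x)$ of the indicator decomposition is a scaled indicator vector whose normalization lies in
\[
U' := \{v \in U : |\Supp(v) \cap P| < C_5 \sqrt{|\Supp(v)| \log n}\}.
\]
Following the proof of Lemma \ref{thrm:U-approx} essentially verbatim (the decomposition is applied to the single vector $x$, and every max is over $v^{(i)}$ chosen from $\{y^{(j)}(x)\}_j$ normalized), this reduces, at a multiplicative cost of $(2 \lceil r \log n \rceil)^r$, to bounding $\max_{v^{(1)},\ldots,v^{(r)} \in U'} A(v^{(1)},\ldots,v^{(r)})$.

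To bound this maximum, I would decompose each $v^{(i)} = u^{(i)} + w^{(i)}$ into the part supported on $[n] \setminus P$ and the part supported on $P$, and expand by multilinearity:
\[
A(v^{(1)},\ldots,v^{(r)}) = \sum_{T \subseteq [r]} A(z^{T,(1)},\ldots,z^{T,(r)}),
\]
with $z^{T,(i)} = w^{(i)}$ for $i \in T$ and $u^{(i)}$ otherwise. Note that $\|w^{(i)}\|_1 = |\Supp(v^{(i)}) \cap P| / |\Supp(v^{(i)})|^{1/2} < C_5 \sqrt{\log n}$ by the intersection bound. The case $|T| \leq 1$ is immediate: every tuple in the support has at most one index in $P$, so no pair lies in $P \times P$, and $A$ agrees on these tuples with the parity tensor $A'$ of the underlying pre-clique random graph. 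Hence $|A(z^T)| \leq \|A'\|_2$, which Theorem \ref{thrm:r-tensor-norm} bounds. The case $|T| = r$ is also easy: $A$ is identically $+1$ on $P^r$, so
\[
|A(w^{(1)},\ldots,w^{(r)})| = \left|\prod_i \sum_{k \in P} w^{(i)}_k\right| \leq \prod_i \|w^{(i)}\|_1 < (C_5 \sqrt{\log n})^r.
\]

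The intermediate case $2 \leq |T| < r$ is the main obstacle. Here the sub-tensor $A^T$ of $A$ on $P^T \times ([n] \setminus P)^{[r] \setminus T}$ has entries $A^T_{\vec k} = \prod_{(i,j) : \text{not both in } T} E'_{k_i k_j}$, i.e.\ products of random signs over pairs with at least one index outside $T$. My plan is to adapt the iterated concentration argument of Lemma \ref{thrm:main-r}: first sum out the $|T|$ indices lying in $P$ via a Cauchy--Schwarz step that extracts a $\prod_{i \in T} \|w^{(i)}\|_1 \leq (C_5 \sqrt{\log n})^{|T|}$ factor, and then apply the original inductive concentration argument of Lemma \ref{thrm:main-r} to the remaining $r - |T|$ indices in $[n] \setminus P$ (where all relevant $E$-entries are genuinely random), yielding a factor of roughly $\sqrt{n} \, (C_3 r \log n)^{(r - |T| - 1)/2}$. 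Summing over the $2^r$ subsets $T$ and combining with the $(2 \lceil r \log n \rceil)^r$ loss from the discretization step and the $2 r^r$ loss from Lemma \ref{thrm:off-diag} yields the claimed $C_1'^{r} r^{5r/2} \sqrt{n} \log^{3r/2} n$ bound.

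The technical crux is verifying that the iterated concentration argument of Lemma \ref{thrm:main-r} goes through cleanly on the modified sub-tensor $A^T$ (which is not itself a parity tensor but a ``parity tensor with some pairs omitted''), and that the union bound over $U'$ is maintained across the recursion. The extra factors of $r^{1/2}$ and $\log^{1/2} n$ in the target bound relative to Theorem \ref{thrm:r-tensor-norm} arise precisely from the $(C_5 \sqrt{\log n})^{|T|}$ clique-contamination factor accumulated in the intermediate cases, making the overall scaling essentially tight.
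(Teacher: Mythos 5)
Your overall architecture matches the paper's: both proofs discretize to indicator vectors in $U'$, split the off-diagonal block (or equivalently each $v^{(i)}$) according to clique versus non-clique indices into $2^r$ pieces, bound the clique contribution using the $\ell_1$ constraint $\sum_{k \in P}|v^{(i)}_k| < C_5\sqrt{\log n}$, and bound the remaining random polynomial via the iterated concentration of Lemma~\ref{thrm:main-r}. However, there is a quantitative gap in your treatment of the $|T|\leq 1$ case: bounding $|A(z^T)|$ by $\|A'\|_2$ via Theorem~\ref{thrm:r-tensor-norm} already carries the $(2\lceil r\log n\rceil)^r$ discretization factor and the $r^r$ factor from Lemma~\ref{thrm:off-diag} inside it, so when you later multiply by those same losses again, the end result is roughly $\sqrt{n}\,r^{7r/2}\log^{5r/2}n$, not the claimed $\sqrt{n}\,r^{5r/2}\log^{3r/2}n$. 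The fix is to use Lemma~\ref{thrm:main-r} directly on the off-diagonal block, as the paper does (and as you already do for $2\leq|T|<r$), which for $|T|=0,1$ gives $\sqrt{n}(C_3 r\log n)^{(r-1)/2}$ or $\sqrt{n}\,C_5\sqrt{\log n}(C_3 r\log n)^{(r-2)/2}$ without any extra discretization cost.

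The ``technical crux'' you flag is real, and neither you nor the paper fully spells it out, but it does go through: once you sum out the $|T|$ clique indices, the deterministic bound $|B^{(k_1,\ldots,k_{r-|T|})}(x^{(r-|T|+1)},\ldots,x^{(r)})| \leq (C_5\sqrt{\log n})^{|T|}$ simply replaces the trivial base case $f_{r-1}=1$ of the induction in Lemma~\ref{thrm:main-r} by $f_{r-|T|-1} \leq (C_5\sqrt{\log n})^{|T|}$. The inductive step of Claim~\ref{thrm:f-induct} is unaffected because it only uses that $v^{(k_1,\ldots,k_\ell)}$ is independent of the fresh random row $u^{(k_\ell)}$, which still holds when the innermost factor is $B$ rather than $1$ (the edges inside $B$ touch indices $> r-|T|$, disjoint from $E_{k_{r-|T|-1},k_{r-|T|}}$). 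With that modification and the $|T|\leq 1$ fix, your proof is essentially the paper's.
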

\begin{proof}
By the same argument used in the discretization, we have that
for any $x \in B'$
\begin{eqnarray}
A(x,\ldots,x) & \leq & (2\lceil r \log n\rceil)^r 
\max_{x^{(1)} \in Y^{(1)}(x), \ldots x^{(r)} \in Y^{(r)}(x)} 
A(x^{(1)},\ldots,x^{(r)}) \nonumber \\
& \leq &
(2\lceil r \log n\rceil)^r
\max_{x^{(1)}, \ldots x^{(r)} \in U'} A(x^{(1)},\ldots,x^{(r)}),
\label{eqn:disc-prime}
\end{eqnarray}
where
\[
U' = \{ |S|^{-1/2} \chi^S : S \subseteq [n], |S \cap P| < C_5 \sqrt{|S|
\log n}\}.
\]

Consider an off-diagonal block $V_1 \times \ldots \times V_r$.  For
each $i \in 1 \ldots r$, let $P_i = V_i \cap P$ and let $R_i = V_i
\setminus P$.  Then, breaking the polynomial $A|_V(\cdot)$ up as a sum
of $2^r$ terms, each corresponding to a choice of $S_1 \in \{P_1,R_1\},
\ldots, S_r \in \{P_r,R_r\}$ gives 
\begin{equation}\label{eqn:P-R-breakup}
\max_{x^{(1)},\ldots,x^{(r)} \in U'} A|_V(x^{(1)},\ldots,x^{(r)}) \leq
2^r
\max_{x^{(1)},\ldots,x^{(r)} \in U'} 
\sum_{S_1 \in \{P_1,R_1\},\ldots, S_r \in \{P_r,R_r\}}
A|_{S_1\times\ldots \times S_r}(x^{(1)},\ldots,x^{(r)}).
\end{equation}
By symmetry, without loss of generality we may consider the case where
$S_i = R_i$ for $i = 1 \ldots r-\ell$ and $S_i = P_i$ for $i=r-\ell+1
\ldots r$ for some $\ell$.  Let $\tilde{V} = R_1\times \ldots \times
R_{r - \ell} \times P_{r - \ell +1} \times \ldots \times P_{r}$.
Then,
\[
\max_{x^{(1)},\ldots,x^{(r)} \in U'} A|_{\tilde{V}}(x^{(1)},\ldots,x^{(r)}) =\\
\sum_{k_1 \in R_1} \ldots \sum_{k_{r - \ell} \in R_{r - \ell}}  
\prod_{i = 1 \ldots r-\ell}  x^{(i)}_{k_i}
\prod_{i,j:i,j \leq r - \ell} E_{k_ik_j} B^{(k_1,\ldots,k_{r-\ell})},
\]
where (as defined Eqn. \ref{eqn:B-def})
\[
B^{(k_1,\ldots,k_{r-\ell})}(x^{(r-\ell+1)},\ldots,x^{r})
\sum_{k_{r-\ell+1} \in P_{r-\ell+1}} \ldots \sum_{k_r \in P_r}  
\prod_{i = r-\ell+1 \ldots r}  x^{(i)}_{k_i}
\prod_{i,j:i,r-\ell+1 < j} E_{k_ik_j}.
\]
By the assumption that every $x^{(i)} \in U'$, this value is at most
$(C_5 \log n)^{\ell/2}$.  Thus,
\[
\max_{x^{(1)},\ldots,x^{(r)} \in U'} 
A|_{\tilde{V}}(x^{(1)},\ldots,x^{(r)}) \leq \\ 
\sum_{k_1 \in R_1} \ldots \sum_{k_{r - \ell} \in R_{r -
\ell}} \prod_{i = 1 \ldots r-\ell} x^{(i)}_{k_i} \prod_{i,j:i,j \leq r
- \ell} E_{k_ik_j} (C_5 \log n)^{\ell/2} .
\]
Note that every edge $E_{k_ik_j}$ above is random, so the polynomial
may be bounded according to Lemma \ref{thrm:main-r}.  Altogether,
\[
\max_{x^{(1)},\ldots,x^{(r)} \in U'}
A|_{\tilde{V}}(x^{(1)},\ldots,x^{(r)}) \leq (\max\{C_5,C_3\} \log
n)^{r/2}.
\]
Combining Eqn. \ref{eqn:disc-prime}, Eqn. \ref{eqn:P-R-breakup}, and
applying Lemma \ref{thrm:off-diag} completes the proof with $C_1'$
chosen large enough.
\end{proof}

\begin{proof}[Proof of Theorem \ref{thrm:reduction}]
The clique is found by finding a vector $x$ such that $A(x,\ldots,x)
\geq \alpha^r |P|^{r/2}$ and then running Algorithm \ref{alg:alg} on
this vector.  Algorithm \ref{alg:alg} clearly runs in polynomial time,
so the theorem holds if the algorithm succeeds with high probability.

By Lemma \ref{thrm:recover} the algorithm does succeed with high
probability when $x \notin B'$, i.e. when some $S \in
\{\Supp(y{-\lceil r \log n \rceil}(x), \ldots, \Supp(y{-\lceil r \log
n \rceil}(x)\}$ satisfies $|S \cap P| \geq C_5 \sqrt{|S| \log n}$.

We claim $x \notin B'$ with high probability.  Otherwise, for some $x \in B'$,
\[
A(x,\ldots,x) \geq \alpha^{r} p^{r/2} > 
C_0^r r^{5r/2} \sqrt{n} \log^{3r/2} n.
\]
This is a low probability event by Lemma \ref{thrm:B-bound} if $C_0
\geq C_1'$.
\end{proof}

\bibliographystyle{plain}
\bibliography{random}

\pagebreak

\appendix

\section{Proof of Lemma \ref{thrm:recover}}
Here, we give a Frieze and Kannan's proof of Lemma \ref{thrm:recover}
for the reader's convenience.  First, we show that the top eigenvector
of $A|_{S \times S}$ is close to the indicator vector for $S \cap P$.

\begin{claim}
There is a constant $C$ such that for every $S \subseteq [n]$ where
$|S \cap P| \geq C \sqrt{|S|\log n}$, the top eigenvector $v$ of the
matrix $A|_{S \times S}$ satisfies
\[
\sum_{i \in S \cap P} v_i > \sqrt{|S \cap P|/2}
\]
\end{claim}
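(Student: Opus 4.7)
My plan is to view $A|_{S \times S}$ as a rank-one ``signal'' coming from the planted clique plus a random ``noise'' matrix, and then to apply an eigenvalue/eigenvector perturbation argument. Let $q = |S \cap P|$ and let $\chi \in \{0,1\}^S$ denote the indicator of $S \cap P$. Because $P$ induces a clique, every off-diagonal entry of $A|_{S \times S}$ with both endpoints in $S \cap P$ equals $+1$, so I can write
\[
A|_{S \times S} = \chi \chi^T + R,
\]
where $R_{ij} = 0$ when both $i,j \in S \cap P$ and $R_{ij} = E_{ij} \in \{-1,+1\}$ is an independent fair sign elsewhere off the diagonal (the diagonal contributes an $O(1)$ perturbation that does not affect the operator norm). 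The rank-one piece has $\chi/\sqrt{q}$ as its top eigenvector with eigenvalue $q$.

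The first step is to establish a uniform bound $\|R\|_2 \leq C' \sqrt{|S|\log n}$ holding simultaneously for every $S \subseteq [n]$ with high probability. For a fixed $S$ the matrix $R$ is a random symmetric $\pm 1$ matrix with a zero block, and a F\"{u}redi--Koml\'{o}s-style tail bound yields $\|R\|_2 \leq c\sqrt{|S|\,t}$ with probability at least $1 - e^{-|S|t}$. Choosing $t = \Theta(\log n)$ and union-bounding over the at most $2^n$ subsets $S$ delivers the desired uniform estimate. This is precisely what the extra $\sqrt{\log n}$ factor in the hypothesis $|S \cap P| \geq C\sqrt{|S|\log n}$ buys.

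Next I would carry out the eigenvector argument. Let $v$ be the top eigenvector of $A|_{S \times S}$, with sign chosen so that $\chi^T v \geq 0$, and decompose $v = \alpha(\chi/\sqrt{q}) + \beta w$ with $w \perp \chi$, $\|w\| = 1$, and $\alpha^2 + \beta^2 = 1$. Since $w \in \ker(\chi \chi^T)$, a direct computation gives $v^T(\chi\chi^T)v = \alpha^2 q$. Using $|v^T R v| \leq \|R\|_2$ and $|(\chi/\sqrt{q})^T R (\chi/\sqrt{q})| \leq \|R\|_2$, the variational characterization of $v$ produces the sandwich
\[
\alpha^2 q + \|R\|_2 \;\geq\; v^T A|_{S\times S}\, v \;\geq\; (\chi/\sqrt{q})^T A|_{S\times S}\, (\chi/\sqrt{q}) \;\geq\; q - \|R\|_2.
\]
Rearranging yields $\alpha^2 \geq 1 - 2\|R\|_2/q \geq 1 - 2C'/C$, so choosing $C > 4C'$ forces $\alpha^2 > 1/2$. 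Since $w \perp \chi$, I conclude
\[
\sum_{i \in S \cap P} v_i \;=\; \chi^T v \;=\; \alpha\,\chi^T(\chi/\sqrt{q}) \;=\; \alpha\sqrt{q} \;>\; \sqrt{q/2} \;=\; \sqrt{|S \cap P|/2}.
\]

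The main obstacle I anticipate is making the operator-norm bound on $R$ genuinely uniform across all candidate sets $S$: a per-$S$ F\"{u}redi--Koml\'{o}s bound is standard, but the union bound over exponentially many subsets demands a concentration tail strong enough to survive a factor of roughly $2^n$, which is exactly what forces the $\sqrt{\log n}$ slack in the hypothesis. Once that uniform bound is in hand, the remainder is a straightforward rank-one eigenvalue perturbation, with the orthogonality $w \perp \chi$ cleanly converting the bound on $\alpha^2$ into the claimed lower bound on $\sum_{i \in S \cap P} v_i$.
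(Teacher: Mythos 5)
Your proof is correct and follows essentially the same route as the paper: decompose $A|_{S\times S}$ as the rank-one clique term $\chi\chi^T$ plus a random $\pm1$ noise matrix $R$, obtain a uniform $O(\sqrt{|S|\log n})$ bound on $\|R\|_2$ by union-bounding a Füredi--Komlós-type tail over all subsets $S$, and then use the variational characterization of the top eigenvector to force $v$ to align with $\chi/\sqrt{q}$. The only real difference is cosmetic: you bound $v^T R v$ directly via $\|R\|_2$, whereas the paper expands $v = cu + (v-cu)$ and bounds the cross terms separately, which is a slightly messier route to the same $c^2 \geq 1 - O(\|R\|_2/q)$ conclusion.
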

\begin{proof}
The adjacency matrix $A$ can be written as the sum of $\chi^P
{\chi^P}^T$ and a matrix $R$ representing the randomly chosen edges.
Let $u = \chi^{S\cap P}/\sqrt{|S \cap P|}$ Suppose that $v$ is the top
eigenvector of $A|_{S \times S}$ and let $c = u \cdot v$.  Then
\begin{eqnarray*}
|S \cap P|^{1/2} & = & A(u,u)\\
& \leq & A|_{S \times S} (v,v)\\
& = & c^2 A|_{S \times S}(u,u) + 
2c\sqrt{1-c^2} A|_{S \times S}(u,v - cu) + 
(1 - c^2) A|_{S \times S}(v-cu, v - cu)\\
& \leq & c^2 |S \cap P|^{1/2} + 3\|R|_{S \times S}\|.
\end{eqnarray*}
Hence 
\[
c^2 \geq 1 - 3\frac{\|R|_{S \times S}\|}{C \sqrt{|S| \log n}}.
\]

By taking a union bound over the subsets $S$ of a fixed size, it
follows from well-known results on the norms of symmetric matrices
(\cite{FK81,Vu05}, also Lemma \ref{thrm:concentration}) that with high
probability
\[
\|R|_{S \times S}\| = O(\sqrt{|S| \log n})
\]
for every $S \subseteq [n]$.  Therefore, the theorem holds for a large
enough constant $C$.
\end{proof}

Next, we show that the clique is dense in the first $8|S \cap P|$
coordinates (ordered according to the top eigenvector $v$).

\begin{claim}\label{thrm:ell-dense}
Suppose $v_1 \geq \ldots \geq v_n$ and $\sum_{i \in S \cap P} v_i > \sqrt{|S
\cap P|/2}$.  Then for $\ell = 8|S \cap P|$
\[
|[\ell] \cap P| \geq \frac{|S \cap P|}{8}.
\]
\end{claim}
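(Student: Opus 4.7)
The plan is to argue by contradiction. Set $q = |S \cap P|$ and $\ell = 8q$, and suppose toward contradiction that $|[\ell] \cap P| < q/8$. I will split the clique sum $\sum_{i \in S \cap P} v_i$ into the part supported on the top $\ell$ coordinates and the rest, bound each piece by $\sqrt{q/8}$, and conclude that the total is strictly less than $\sqrt{q/2}$, contradicting the hypothesis.

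For the first piece, let $T = (S \cap P) \cap [\ell]$, so $|T| < q/8$ by the contradiction assumption. Since $v$ is a unit vector, Cauchy--Schwarz yields
\[
\sum_{i \in T} v_i \leq \sqrt{|T|}\sqrt{\sum_{i \in T} v_i^2} \leq \sqrt{|T|} < \sqrt{q/8}.
\]

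For the second piece, I exploit the sorted ordering $v_1 \geq \cdots \geq v_{|S|}$ to obtain a uniform upper bound $v_i \leq v_\ell$ for every $i > \ell$. If $v_\ell \leq 0$ the bound $v_\ell \leq 1/\sqrt{\ell}$ is immediate, and if $v_\ell > 0$ then each of the top $\ell$ coordinates is at least $v_\ell$, so $\ell v_\ell^2 \leq \sum_{i=1}^\ell v_i^2 \leq \|v\|^2 = 1$, giving $v_\ell \leq 1/\sqrt{\ell} = 1/\sqrt{8q}$. Since $|(S \cap P) \setminus [\ell]| \leq q$, this yields
\[
\sum_{i \in (S \cap P) \setminus [\ell]} v_i \leq q \cdot v_\ell \leq q/\sqrt{8q} = \sqrt{q/8}.
\]

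Adding the two bounds gives $\sum_{i \in S \cap P} v_i < 2\sqrt{q/8} = \sqrt{q/2}$, contradicting the hypothesis $\sum_{i \in S \cap P} v_i > \sqrt{q/2}$. There is no real obstacle here; the only point requiring a moment's care is the possible negativity of $v_\ell$, which if anything only strengthens the tail bound, so the argument is robust. The constant $8$ in $\ell = 8q$ is exactly what makes $\sqrt{q/8} + \sqrt{q/8}$ match the $\sqrt{q/2}$ on the right-hand side of the hypothesis, so the claim is essentially tight for this proof strategy.
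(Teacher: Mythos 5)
Your proof is correct and uses essentially the same ingredients as the paper: split the clique sum $\sum_{i\in S\cap P} v_i$ into the head ($i\le\ell$) and tail ($i>\ell$), bound the head via Cauchy--Schwarz against $\|v\|=1$, and bound the tail using the sorted ordering (equivalently $v_\ell \le 1/\sqrt{\ell}$). The paper writes the same calculation as a direct inequality chain $\sqrt{\ell}\ge\sum_{i\le\ell}v_i\ge(\ell/q)\bigl(\sqrt{q/2}-\sqrt{|[\ell]\cap P|}\bigr)$ valid for any $\ell$ and then plugs in $\ell=8q$, whereas you argue by contradiction for $\ell=8q$ specifically; these are two phrasings of the same argument.
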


\begin{proof}[Proof of Claim \ref{thrm:ell-dense}]
For any integer $\ell$,
\begin{eqnarray*}
\sqrt{\ell} & \geq & \sum_{i \leq \ell} v_i \\
& \geq & \frac{\ell}{|S \cap P|} \sum_{i > \ell, i \in P} v_i\\
& = & \frac{\ell}{|S \cap P|} \left( \sum_{i \in P} v_i
- \sum_{i \leq \ell, i \in P} v_i \right)\\
& \geq & \frac{\ell}{|S \cap P|} \left( \sqrt{|S \cap P|/2} - \sqrt{|[\ell] \cap P|}\right).
\end{eqnarray*}
Thus,
\[
\sqrt{|[\ell] \cap P|} \geq \sqrt{|S \cap P|/2} - \frac{|S \cap
P|}{\sqrt{\ell}}.
\]
Taking $\ell = 8|S \cap P|$ (optimal), we have
\[
\sqrt{|[\ell] \cap P|} \geq \frac{1}{2\sqrt{2}} \sqrt{|S \cap P|}.
\]
\end{proof}

Given this density, it is possible to pick $10 \log n$ vertices from
the clique and use this as a seed to find the rest of the clique.
When $\ell = 8|S \cap P|$, in each iteration there is at least a
\[
8^{-10 \log n} = n^{-30}
\]
chance that $Q_1 \subseteq P$.  With high probability, no set of $10
\log n$ vertices in $P$ has more than $ 2 \log n$ common neighbors
outside of $P$ in $G$.  The contrary probability is
\[
\binom{|P|}{10 \log n} \binom{n}{2 \log n} 2^{-20 \log^2 n} = o(1).
\]
Letting $Q_2$ be the common neighbors of $Q_1$ in $G$, it follows that
$Q_2 \supseteq P$ and $|Q_2 \setminus P| \leq 2 \log n$.  Now, with
high probability no common neighbor has degree more than $3|P|/4$ in
$P$, because
\[
n \binom{|P|}{10 \log n} \binom{n}{2 \log n} \exp( - |P|/24) = o(1).
\]
for $|P| > 312 \log^2 n$.

Thus, with high probability no vertex outside of $P$ will have degree
greater than $7|P|/8$ in the subgraph induced by $Q_2$.
\end{document}